\theoremstyle{definition} 
\theoremstyle{definition} 
\theoremstyle{plain} \newtheorem{lemma}{Lemma}
\theoremstyle{plain} \newtheorem{proposition}{Proposition}
\begin{document}
%
\title{On Throughput Optimization and Bound Analysis in Cache-Enabled Fiber-Wireless Networks}
\author{\IEEEauthorblockN{Zhuojia Gu, Hancheng Lu,~\IEEEmembership{Senior~Member,~IEEE}, Zuqing Zhu,~\IEEEmembership{Senior~Member,~IEEE}}

\thanks{Z. Gu, H. Lu and Z. Zhu are with Key Laboratory of Wireless-Optical Communications, Chinese Academy of Sciences, School of Information Science and Technology, University of Science and Technology of China, Hefei 230027, China (Email: guzj@mail.ustc.edu.cn; hclu@ustc.edu.cn; zqzhu@ieee.org).}
\thanks{This work was supported in part by the National Science Foundation of China (No.61771445, No.61631017, No.91538203). This paper was presented in part at IEEE GLOBECOM 2018 \cite{Joint-power-allocation-and-caching}.}}

\maketitle

%


\begin{abstract}
With the dense deployment of millimeter wave (mmWave) front ends and popularization of bandwidth-intensive applications, shared backhaul in fiber-wireless (FiWi) networks is still facing a bandwidth crunch. To alleviate the backhaul pressure, in this paper, caching capability is enabled at the edge of FiWi networks, i.e., optical network unit access points (ONU-APs). On the other hand, as both power budget and backhaul bandwidth in FiWi networks are constrained, it is challenging to properly leverage power for caching and that for wireless transmission to achieve superior system performance. As caching has a significant impact on resource allocation, we reconsider performance optimization and analysis in cache-enabled FiWi networks. Firstly, in the cache-enabled FiWi network with mmWave, we formulate the joint power allocation and caching problem, with the goal to maximize the downlink throughput. A two-stage algorithm is then proposed to solve the problem.
Secondly, to investigate the theoretical capacity of the cache-enabled FiWi network with mmWave, we derive an upper bound of the downlink throughput by analyzing properties of the average rate of wireless links. Particularly, we show that appropriate power allocation for wireless transmission and caching at ONU-APs is essential to achieve higher throughput. The numerical and simulation results validate our theoretical analysis and demonstrate the proposed algorithm can approach the analytical upper bound.
\end{abstract}

\hspace{8mm}
\par
\begin{IEEEkeywords}
Fiber-wireless networks, millimeter wave, backhaul, caching, power allocation
\end{IEEEkeywords}

%
\IEEEpeerreviewmaketitle

\section{Introduction}
\lettrine[lines=2]{N}{owadays}, the Internet is foreseen to suffer from a severe bandwidth crunch in the coming years \cite{Joint-power-allocation-and-caching, Enhancing-end-user-bandwidth, Lu2015_Network}. According to Cisco's forecast \cite{ref1}, global Internet traffic will achieve a three-fold increase over the next 5 years, and video applications will be approximately 82\% of all Internet traffic by 2022, up from 75\% in 2017. Evidently, supporting such increasing demand for Internet traffic is then vital for future access networks. To cope with the growing bandwidth crunch in access networks, fiber-wireless (FiWi) networks, as an integration of fiber backhaul and wireless access, have emerged as a promising solution to provide an efficient ``last mile'' Internet access and have been widely deployed, especially in the field of multimedia communications \cite{ref2}. In light of the complementary technical merits of fiber backhaul and wireless access \cite{Gong2014_JLT, Liang2017_JOCN}, FiWi networks have attracted intensive research interests over the past two decades, which attempt to combine high capacity and reliability of fiber backhaul with high flexibility and ubiquitous connectivity of wireless front ends \cite{high-capacity-and-flexibility}.

To alleviate the bandwidth pressure at wireless front ends, the use of large-bandwidth at millimeter wave (mmWave) frequency bands, between 30 and 300 GHz, becomes a promising solution for fifth generation (5G) cellular networks and has attracted considerable attention recently \cite{Coverage-and-rate-analysis, Millimeter-wave-mobile}. In this paper, we consider a FiWi network with mmWave, where mmWave frequency bands are used for the wireless front ends. In this network, the fiber backhaul bandwidth is shared by all mmWave front ends. To cope with blockage effects of mmWave links, the mmWave front ends are supposed to be densely deployed to provide seamless coverage\cite{Content-placement-in-cache-enabled}. With the popularization of bandwidth-intensive applications (i.e., augmented or virtual reality, etc.), shared backhaul in the FiWi network is still facing a bandwidth crunch.

Some research efforts have been devoted to next-generation passive optical networks (NG-PONs) \cite{NGPON2-technology-and-standards,Emerging-optical-access-network}, which aim at increasing the capacity of state-of-the-art PON based fiber backhaul. However, existing optical network units (ONUs) and optical line terminals (OLTs) are not directly compatible with the technical specifications of NG-PONs, so ONUs and OLTs must be replaced or upgraded to support NG-PONs, which is a costly method at present.
Fortunately, it has been reported that a large number of files in wireless networks are repeatedly requested, such as high volume video files and map data in vehicular networks \cite{Self-sustaining-caching}.
Based on this fact, we enable caching capability at the edge of FiWi networks to alleviate the backhaul pressure. However, in this case, caching has a significant impact on resource allocation, and the performance of cache-enabled FiWi networks should be reconsidered.

\subsection{Related Work}
\textit{1) Caching in optical fiber networks}:
There have been growing recent researches towards data caching as an approach for alleviating the bandwidth pressure in optical fiber networks. In \cite{Enhancing-end-user-bandwidth}, an architecture consisting of an ONU with an associated storage unit was proposed to save traffic in the feeder fiber and improve the system throughput as well as mean packet delays. In \cite{ref12}, a framework of software-defined PONs with caches was introduced to achieve a substantial increase in served video requests. In \cite{ref13}, a dynamic bandwidth algorithm based on local storage video-on-demand delivery in PONs was proposed, and the achievable throughput levels have been improved when a local storage is used to assist video-on-demand delivery. However, most of them mainly focus on providing storage capacity in PONs to improve the network throughput, which did not consider the impact of wireless front ends on resource allocation, so these methods cannot be directly applied to FiWi networks.

\textit{2) Caching in wireless networks}: Data caching has also been used to offload backhaul traffic in wireless networks. In \cite{Content-placement-in-cache-enabled}, a cache-enabled mmWave small cell network was proposed to achieve the maximum successful content delivery probability. The cache-related coverage probability of a mmWave network was derived in \cite{Cache-enabled-hetnets}, with the aid of stochastic geometry.  In \cite{An-analysis-on-caching-placement}, a two-tier heterogeneous network consisting of both sub-6 GHz and mmWave BSs with caches was proposed, where the authors considered the average success
probability (ASP) of file delivery as the performance metric and proposed efficient schemes to maximize the ASP file delivery. The work in \cite{Cache-aided-millimeter-wave} investigated the performance of cache-enabled mmWave ad hoc networks and demonstrated the effectiveness of off-line caching with respect to blockages.

\textit{3) Power consumption in cache-enabled networks}:
Power consumption has become one of the main concerns in both optical and wireless communication networks due to the rapid growth in Internet traffic, especially when involving caches in networks, which introduces additional power consumption that cannot be ignored. The power efficiency of different caching hardware technology has been shown in \cite{caching-efficiency}. As an example, the commonly used storage medium high-speed solid state disk (SSD) consumes about 5 W (Watt) of caching power per 100 GB.
In \cite{Energy-Efficiency-of-an}, the authors study the impact of deploying video cache servers on the power consumption over hybrid optical switching, and demonstrate a careful dimensioning of cache size to achieve high power efficiency. In \cite{Energy-efficiency-of}, the authors studied whether local caching at base stations can improve the energy efficiency by taking into account the impact of caching power. In \cite{On-Energy-Efficient-Edge}, an optimal caching scheme was proposed to minimize the energy consumption in cache-aided heterogeneous networks.

\textit{4) Mobile edge computing in FiWi networks}:
In recent years, many researches have applied FiWi networks to mobile edge computing (MEC). MEC-enabled FiWi networks were first introduced in \cite{Cloudlet-enhanced-fiber-wireless} to reduce the packet delay by leveraging the computing resources of MEC. In \cite{Collaborative-computation-offloading}, FiWi networks were adopted to provide supports for the coexistence of centralized cloud and MEC, and two collaborative computation offloading schemes were proposed to achieve superior computation offloading performance. Meanwhile, the impact of dynamic bandwidth allocation in MEC-enabled FiWi networks was studied in \cite{Mobile-edge-computing-versus}, and a novel unified resource management scheme incorporating both centralized cloud and MEC computation offloading activities was proposed. In \cite{FiWi-enhanced-vehicular-edge, Task-offloading-in-vehicular}, MEC-enabled FiWi networks were introduced in vehicular networks to support collaborative computation task offloading of intelligent connected vehicles. A FiWi enhanced smart-grid communication architecture was proposed in \cite{Big-data-acquisition-under} and the problem of data acquisition under failures in FiWi enhanced smart-grids was investigated. Existing research work on MEC-enabled FiWi networks mainly focused on the computation offloading capability.

\subsection{Motivation and Contribution}
The current studies on the edge of FiWi networks mainly focus on providing computation capability at the network edge to enhance the network performance. However, there is a lack of studies focusing on the impact of caching capability at the edge of FiWi networks, which plays an important role in improving the throughput of FiWi networks.
Note that in FiWi networks, both the wireless front ends and the shared fiber backhaul by all wireless front ends will make an impact on the system throughput, which is different from either optical fiber networks or wireless networks.
Although existing studies have put many efforts on the impact of caching in optical fiber networks or wireless networks, it is non-trivial to achieve optimal performance in cache-enabled FiWi networks due to the significant impact of caching on resource allocation.
Firstly, as fiber backhaul is shared by all wireless front ends, resource allocation at ONU-APs is coupled. Secondly, power allocation and caching should be jointly optimized at ONU-APs. In cache-enabled FiWi networks, power consumed for caching depending on the memory size of caching files is non-neglected. Therefore, there is a tradeoff between power allocated for caching and that for wireless transmission at ONU-APs when downlink throughput is considered. If more power is used for wireless transmission, higher downlink throughput can be achieved. In this case, with constrained power budget at ONU-APs, less power will be allocated for caching, putting more bandwidth pressure on fiber backhaul.

To combat above challenges in cache-enabled FiWi networks, we conduct performance optimization and analysis with consideration of the significant impact of caching on resource allocation. MmWave based wireless front ends are assumed in this paper. The main contributions are summarized as follows.
\begin{itemize}
  \item In a FiWi network with mmWave, we propose to equip caches at ONU-APs to alleviate the backhaul pressure. In the proposed cache-enabled FiWi network where both the backhaul bandwidth and power budget at ONU-APs are assumed to be constrained, we formulate the joint power allocation and caching optimization problem as a mixed integer nonlinear programming (MINLP) problem, with the goal to maximize the downlink throughput.
  \item To solve the joint optimization problem, we propose a two-stage algorithm. At the first stage, we propose a volume adjustable backhaul constrained water-filling method (VABWF) using convex optimization to derive the expression of optimal wireless transmission power allocation at each ONU-AP. At the second stage, we reformulate the problem as a multiple-choice knapsack problem (MCKP) by exploiting the highest-popularity-first property of files. Then, a dynamic programming algorithm based on the proposed VABWF method is proposed to approach the optimal power allocation and caching strategies.
  \item To investigate the theoretical capacity of the proposed cache-enabled FiWi network with mmWave, we first derive a tractable expression of the average rate of mmWave links and find that increasing the average rate does not necessarily improve the downlink throughput. Then, we provide a theoretical upper bound of the downlink throughput by utilizing the analytical properties of the average rate. Simulation results demonstrate that the proposed algorithm can approach the theoretical downlink throughput upper bound.
\end{itemize}

The rest of this paper is organized as follows. We first introduce the system model in Section \uppercase\expandafter{\romannumeral2}. In Section \uppercase\expandafter{\romannumeral3}, we describe the proposed two-stage algorithm. In Section \uppercase\expandafter{\romannumeral4}, we give theoretical analysis on the downlink throughput upper bound by exploiting the analytical properties of the average rate of mmWave links. Numerical and simulation results are shown in Section \uppercase\expandafter{\romannumeral5} and concluding remarks are provided in Section \uppercase\expandafter{\romannumeral6}.

\section{System Model}
\subsection{Network Model}
The network model we propose is shown in Fig. \ref{fig:FiWi network architecture with caches at ONU-APs}. The FiWi network with mmWave is divided into fiber backhaul and multiple wireless networks based on mmWave front ends.
We adopt Passive Optical Network (PON) as fiber backhaul, which can leverage wavelength-division multiplexing and flexible-grid spectrum assignment \cite{Zhu2013_JLT, Gong2013_JOCN, Yin2013_JOCN} to provide optical access.
The optical line terminal (OLT) is located at the central office, connecting to a passive optical splitter through the feeder fiber. The passive optical splitter is connected to multiple ONUs through the distribution fiber. In the FiWi network, each ONU is collocated with an AP, and the integration of the ONU and AP is called integrated ONU-AP.
The ONU-APs are denoted by an index set $\mathcal{N} = \{1,\cdots, N \} $, with each ONU-AP $n \in \mathcal{N}$.
We assume that the backhaul constraint on the feeder fiber connecting the passive optical splitter and the OLT is $C$.

At the wireless front ends, considering the advantage of high bandwidth of mmWave communication, it has a tendency to become an important technology in the future 5G network \cite{Millimeter-wave-mobile, Coverage-and-rate-analysis}, so we consider mmWave based wireless front ends in this paper. ONU-APs are assumed to be spatially deployed according to a hexagonal cell planning. For easy analysis, we approximate the hexagonal cell shape as a circle with coverage radius $D$. User equipments (UEs) are distributed in the coverage of ONU-APs.
The UEs are denoted by an index set $\mathcal{K} = \{1,\cdots, K \} $, with each UE $k \in \mathcal{K}$.
We assume that each UE is associated with the closest ONU-AP, thus we denote the set of UEs associated with ONU-AP$_n$ by $\Phi_n,n \in \mathcal{N}$.

We assume that directional beamforming is adopted at each ONU-AP, and the main-lobe gain of using directional beamforming is denoted by $G$. Note that mmWave transmissions are highly sensitive to blockage, so we adopt a two-state statistical blockage model for each link as in \cite{Coverage-and-rate-analysis}, such that the probability of the link to be LOS or NLOS is a function of the distance between a UE and its serving ONU-AP. Assume that the distance between them is $r$, then the probability that a link of length $r$ is LOS or NLOS can be modeled as
    $\rho_\mathrm{L}(r) = \mathrm{e}^{-\beta r},\ \rho_\mathrm{N}(r) = 1 - \mathrm{e}^{-\beta r}$,
respectively, where $\beta$ is the decay rate depending on the blockage parameter and density \cite{Analysis-of-blockage}. Independent Nakagami fading is assumed for each link. Parameters of Nakagami fading $N_\mathrm{L}$ and $N_{\mathrm{N}}$ are assumed for LOS and NLOS links, respectively. Besides, the existing literatures have confirmed that mmWave transmissions tend to be noise-limited and interference is weak \cite{Coverage-in-heterogeneous, Tractable-model-for-rate}. Therefore, when UE$_k$ requests files from its associated ONU-AP$_n$, the received signal-to-noise ratio is given by
\begin{equation}\label{SINR-m}
  \Upsilon_{nk} = \frac{P_{nk} h_{nk} G r^{-\alpha_{m}} }{ \sigma^2  },
\end{equation}
where $P_{nk}$ is the transmit power allocated to UE$_k$ by ONU-AP$_n$, $h_{nk}$ is the Nakagami channel fading which follows Gamma distribution. The path loss exponent $\alpha_{m} = \alpha_{\mathrm{L}}$ when it is a LOS link and $\alpha_{m} = \alpha_{\mathrm{N}}$ when it is an NLOS link. $\sigma^2$ is the noise power of a mmWave link.
Then the channel state information (CSI) from ONU-AP$_n$ to UE$_k$ is  $g_{nk} = r_{nk}^{-\alpha_m} h_{nk}$.
The wireless transmission rate from ONU-AP$_n$ to UE$_k$ is given by the Shannon's theorem
\begin{equation}\label{shannon}
  R_{nk}=B\log_{2}\left(1 + \Upsilon_{nk}\right),
\end{equation}
where $B$ is the subchannel bandwidth for each UE.

\begin{figure}
\centering
\includegraphics [width=3.4in, height=2.6in]{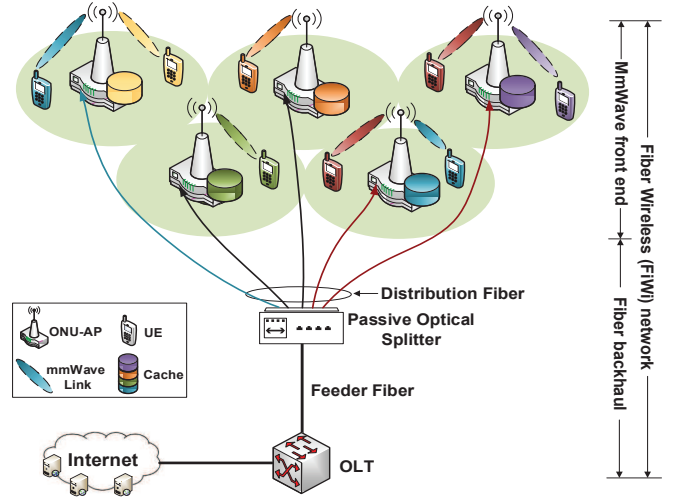}
\caption{FiWi network architecture with caches at ONU-APs.}
\label{fig:FiWi network architecture with caches at ONU-APs}
\end{figure}

\begin{table*}
  \centering
  \caption{Key Notations}
  \label{summary of the notation used in this paper}
  \begin{tabular}{|p{1.2cm}|p{6.7cm}|p{1.5cm}|p{6.6cm}|}
    \hline
    \textbf{Symbol} & \textbf{Description}  &  \textbf{Symbol} & \textbf{Description}  \\\hline
    $\mathcal{N}$ & Set of ONU-APs    &  $\Phi_{n}$ & Set of UEs associated to ONU-AP$_n$     \\\hline
    $\mathcal{J}$ & Set of files  &  $|\Phi_{n}|$ & Number of UEs associated to ONU-AP$_n$   \\\hline
    $\mathcal{K}$ & Set of UEs  &     $P_{M}$ & The preset maximum power consumed by an ONU-AP\\\hline
    $B$ &  Subchannel bandwidth for each UE &   $\alpha_\mathrm{L}$ / $\alpha_{\mathrm{N}}$  &  Path loss exponent for mmWave LOS/NLOS links   \\\hline
    $C$ &  Fiber backhaul capacity  &  $N_\mathrm{L}$ / $N_{\mathrm{N}}$  &  Parameters of Nakagami fading for LOS/NLOS links \\\hline
    $D$ &  Coverage radius of ONU-APs &   $p_j$ &  Probability of the $j$-th file being requested   \\\hline
    $G$ &  Mainlobe gain of directional beamforming at ONU-APs &  $s$ &  Size of a file \\\hline
    $Q$ &  Cache size of an ONU-AP  &    $\beta$  &  Blockage density of mmWave links    \\\hline
    $P_{nk}$ & Transmission power allocated to UE$_k$ by ONU-AP$_n$ &   $\delta$ &  Parameter of Zipf distribution   \\\hline
    $R_{nk}$ &  Transmission rate from ONU-AP$_n$ to UE$_k$  &  $\lambda$  & Density of UEs\\\hline
    $g_{nk}$ &  Channel state information from ONU-AP$_n$ to UE$_k$ &  $\rho$ &  Transmission power coefficient at ONU-APs  \\\hline
    $h_{nk}$ &  Channel fading from ONU-AP$_n$ to UE$_k$  &   $\tau$  &  The average rate of mmWave links \\\hline
    $r_{nk}$ &  Distance between ONU-AP$_n$ and UE$_k$   &    $\omega$ & Caching power coefficient \\\hline
    $x_{nj}$ &  Caching decision for the $j$-th file to ONU-AP$_n$ &  $\sigma^2$ &  Background noise power level   \\\hline
    $\Upsilon_{nk}$  &  Received signal-to-noise ratio from ONU-AP$_n$ to UE$_k$  &   $p_n^{hit}$ / $p_n^{miss}$  &  Cache hit ratio / cache miss ratio at ONU-AP$_n$\\\hline
  \end{tabular}
\end{table*}

\subsection{ONU-AP Caching Model}
Each ONU-AP is cache-enabled, and the cache size is denoted by $Q$.
The requested files are denoted by an index set $\mathcal{J} = \{1,\cdots, J \} $, with each file $j \in \mathcal{J}$.
We consider a proactive caching model, where popular files are pre-cached in ONU-APs during off-peak intervals (e.g., at night). If ONU-AP$_n$ already caches a requested file, it will respond to the request directly. Otherwise, the file should be fetched from the Internet via capacity-limited fiber backhaul. The size of each file is assumed to be $s$ bits\footnote{For analytical simplicity, the size of each file is assumed to be equal. If the file sizes are unequal, similar to \cite{FemtoCaching-Wireless, An-analysis-on-caching-placement}, a fine-grained packetization can be considered to divide each file into chunks of equal size, with each chunk being treated as an individual file, so the same analysis can still be applied.}. According to \cite{Web-caching-and}, the file popularity distribution typically follows a Zipf distribution, which is widely adopted in the literature such as \cite{Edge-caching-in-dense, Caching-placement-in-stochastic, Energy-efficiency-of}, so in this paper we model the file popularity distribution of $J$ files as Zipf distribution. Then the probability of the $j$-th ranked file being requested by UEs is $p_{j}=\frac{j^{-\delta}}{\sum_{n=1}^{J} n^{-\delta}}$,
where $\delta$ is a shape parameter that shapes the skewness of the popularity distribution. Note that the file popularity distribution is not necessarily confined to the Zipf distribution but can accommodate any discrete file popularity distribution\footnote{When the file popularity does not follow the Zipf distribution, we can index the files according to the popularity, i.e., $p_j \geq p_{j+1}, j=1, \cdots, J-1$, and the following analysis and optimization still hold since they do not rely on the assumption that file popularity follows the Zipf distribution. }. Files requested by UEs that are not cached at ONU-APs should be fetched from the Internet via fiber backhaul. Given the cache decision on the $j$-th file at ONU-AP$_n$ as $x_{nj},x \in \{0,1\}$, the expected fiber backhaul rate of uncached files for UEs associated with ONU-AP$_n$ can be written as $\sum_{k \in \Phi_n} \sum_{j \in \mathcal{J}} p_{j} (1 - x_{nj}) R_{nk}$.

\subsection{Power Consumption Model}
The total power consumption at ONU-AP$_n$ can be obtained by leveraging the typical power consumption model of wireless networks\cite{Energy-efficiency-of} and including the power used for caching as
\begin{equation}\label{Pntotal}
  P_n^{total}=\sum\limits_{k \in \Phi{n}}\rho P_{nk}+P_n^{ca}+P_n^{cc},
\end{equation}
where $P_n^{ca}$ denotes the power consumed at ONU-AP$_n$ for caching. $P_n^{cc}$ denotes the power consumed by circuits, similar to \cite{Energy-efficient-caching-for, Full-duplex-massive}, which is assumed to be a constant related to the circuit design.
$\rho$ is a coefficient that measures the impacts of power amplifier, power supply and cooling.
We use an energy-proportional model for caching power consumption \cite{On-Energy-Efficient-Edge,power-proportional-model2}, which has been widely adopted in content-centric networking for efficient use of caching power. In this model, the consumption of caching power is proportional to the total number of bits cached at an ONU-AP, i.e., $P_n^{ca}=\omega \Omega_n$, where $\Omega_n$ denotes the total number of bits cached at ONU-AP$_n$, and $\omega$ is a power coefficient that is related to the caching hardware and reflects the caching power efficiency in watt/bit. In this work, we consider the common caching device, high-speed SSD, for caching files at ONU-APs. The value of $\omega$ for SSD is $6.25 \times 10^{-12}$ watt/bit \cite{Energy-efficiency-of, In-network-caching-effect}.
Assume that the maximum power consumed by an ONU-AP can be preset, which is denoted by $P_M$. Each ONU-AP can adjust the transmission power and the caching power to achieve higher throughput without exceeding the preset maximum power limit, namely, $P_n^{total}\leq P_M$.
Here, for notational convenience, the circuit power $P_n^{cc}$ is omitted since it is assumed to be a constant.
The key notations are summarized in Table \uppercase\expandafter{\romannumeral1}.

\section{Joint Power Allocation and Caching} \label{Section:joint-power-allocation-and-caching}
In this section, we first formulate the joint power allocation and caching problem. Then, we propose an optimal transmission power allocation method, named VABWF method, as the first stage of the proposed algorithm. To perform the joint optimization of power allocation and caching, an efficient algorithm based on dynamic programming is provided by converting the problem into a standard MCKP as the second stage of the proposed algorithm.

\subsection{Problem Formulation}
Our objective is to maximize the downlink wireless access throughput while keeping the total power consumption not exceeding the preset maximum power at ONU-APs by making joint caching and power allocation decisions. This is formulated as follows,

\begin{subequations}
\begin{align}
\textbf{P1: } \max_{\bm{x},\bm{P}} \label{objective function}   &\quad\sum_{n\in \mathcal{N}}\sum_{k \in \Phi_{n}} R_{nk} \\
\text{s.t.}    &\quad \sum_{k \in \Phi_{n}}\rho P_{nk}+\omega\sum_{j \in \mathcal{J}}x_{nj}s\leq P_M\label{constraint:power_constraint},\forall n\in \mathcal{N}, \\
        &\label{constraint:backhaul constraint}\quad \sum_{n \in \mathcal{N}}\sum_{k \in \Phi_{n}}\sum_{j \in \mathcal{J}}p_j(1-x_{nj})R_{nk} \leq C, \\
        &\quad s\sum_{j \in \mathcal{J}}x_{nj} \leq Q, \quad \forall n\in \mathcal{N},\label{constraint:caching capacity constraint}\\
        &\quad P_{nk} \geq 0 , \quad \forall n\in \mathcal{N}, \forall k \in \Phi_{n},\label{constraint:non-negative constraint}\\
        &\quad x_{nj}\in\{0,1\}, \quad  \forall n\in \mathcal{N},\forall j \in \mathcal{J}. \label{constraint:binary constraint}
\end{align}
\end{subequations}

Constraint (\ref{constraint:power_constraint}) makes sure that the sum of caching and transmission power does not exceed the maximum power constraint $P_M$. Constraint (\ref{constraint:backhaul constraint}) ensures that the backhaul occupancy by uncached files should satisfy the capacity constraint of fiber backhaul $C$. Constraint (\ref{constraint:caching capacity constraint}) makes sure that the total size of the files cached at ONU-AP$_n$ should not exceed its cache capacity $Q$. Constraint (\ref{constraint:non-negative constraint}) guarantees that the transmission power is non-negative. Constraint (\ref{constraint:binary constraint}) means that the cache decision on the $j$-th file at ONU-AP$_n$ is a binary variable $x_{nj}$.

\subsection{Optimal Transmission Power Allocation Method}
Problem \textbf{P1} is a typical Mixed Integer Programming (MIP) problem, which is non-linear and non-convex. However, the problem becomes a convex optimization problem if the caching placement decision $\{x_{nj} \}$ is fixed and the backhaul constraint (\ref{constraint:backhaul constraint}) is neglected, which can be written as
\begin{subequations}
\begin{align}
\textbf{P2: } \max_{\bm{P}}    &\quad\sum_{n\in \mathcal{N}}\sum_{k \in \Phi_{n}} R_{nk} \label{obj-func-P2} \\
\text{s.t.}    &\quad \sum_{k \in \Phi_{n}}\rho P_{nk}+\omega\sum_{j \in \mathcal{J}}x_{nj}^{\circ} s\leq P_M,\forall n\in \mathcal{N}, \label{power-constraint-P2} \\
&\quad P_{nk} \geq 0 , \quad \forall n\in \mathcal{N}, \forall k \in \Phi_{n}. \label{power-constraint2}
\end{align}
\end{subequations}

The objective function (\ref{obj-func-P2}) is a strictly concave and increasing function with respect to $P_{nk}$, and the inequality constraints (\ref{power-constraint-P2}) and (\ref{power-constraint2}) are convex for a given caching solution $\{x_{nj}^\circ\}$. Therefore, problem \textbf{P2} is a convex optimization problem.
By deriving the Karush-Kuhn-Tucker (KKT) conditions, we can get the optimal solution to transmission power allocation.
The Lagrangian of \textbf{P1} is given as
\begin{align}\label{Lagrangian}
    \mathcal{L}(\bm{P},\bm{\mu},\bm{\varepsilon}) = & -\sum_{n \in \mathcal{N}}\sum_{k \in \Phi_{n}}R_{nk}  -\sum_{n \in \mathcal{N}}\sum_{k \in \Phi_{n}}\varepsilon_{nk}P_{nk} \nonumber  \\
   & + \sum_{n \in \mathcal{N}}\mu_n\Big(\sum_{k \in \Phi_n}\rho P_{nk}+\omega\sum_{j \in \mathcal{J}}x_{nj} s-P_M\Big),
\end{align}
where $\bm{\mu} \in \mathbf{R}^N,\bm{\varepsilon} \in \mathbf{R}^{N\times K}$are Lagrangian multipliers.
The KKT conditions can be expressed as
\begin{subequations}
\begin{numcases}{}
\frac{\partial \mathcal{L}}{\partial P_{nk}} = - \frac{g_{nk} B }{(\sigma^{2}+g_{nk}P_{nk})\ln2}
 + \rho\mu_n - \varepsilon_{nk} = 0, \qquad \label{necessary condition} \\
\mu_n\bigg(\sum_{k \in \Phi_n}\rho P_{nk}+\omega\sum_{j \in \mathcal{J}}x_{nj}^\circ s-P_M\bigg)=0, \label{complementary slackness2}\\
\varepsilon_{nk}P_{nk}=0,\quad \forall n \in \mathcal{N}, \forall k \in \Phi_{n}, \label{complementary slackness3} \\
\mu_n,\varepsilon_{nk}\geq 0,\quad \forall n \in \mathcal{N}, \forall k \in \Phi_{n}, \label{dual feasibility}
\end{numcases}
\end{subequations}
where (\ref{necessary condition}) is a necessary condition for an optimal solution, (\ref{complementary slackness2}) and (\ref{complementary slackness3}) represent the complementary slackness, and (\ref{dual feasibility}) represents the dual feasibility.

Note that the downlink throughput in the FiWi network is maximized when the ONU-APs consume the maximum power, i.e.,satisfy
\begin{equation}
         \sum_{k \in \Phi_{n}}\rho P_{nk}+\omega\sum_{j \in \mathcal{J}}x_{nj}s= P_M
         ,\quad \forall n \in \mathcal{N}.\label{equality of power}
\end{equation}
This can be proved by contradiction. Suppose that there exists $n \in N$ satisfying $\sum\nolimits_{k \in \Phi_{n}}\rho P_{nk}$ $+\omega\sum\nolimits_{j \in \mathcal{J}}x_{nj}s< P_M$, then ONU-AP$_n$ can increase its caching power to $P_M-\sum\nolimits_{k \in \Phi_{n}}\rho P_{nk}$ for a higher cache hit ratio without reducing the sum rate of UEs associated to ONU-AP$_n$.
Another fact can be observed that files with higher popularity should be cached preferentially for maximizing the throughput in the FiWi network, which can also be easily proved by contradiction. Thus the highest-popularity-first caching strategy is adopted in this paper.

\begin{proposition}
For problem \textbf{P1}, the optimal transmission power $P_{nk}^*$ allocated to UE$_k$ by ONU-AP$_n$ can be expressed as\label{Theorem1}
\begin{equation}\label{optimal Pnk}
 P_{nk}^*=\bigg(\frac{B}{\rho\mu_n \ln2}-\frac{\sigma^2}{g_{nk}}\bigg)^+,
\end{equation}
where $(\cdot)^+=\max(\cdot,0)$, and the corresponding Lagrangian multiplier $\mu_n$ is
\begin{equation}\label{mu n without lamda}
 \mu_n=\frac{|\Phi_n|B}{\left(P_M+\sum\limits_{k \in \Phi_n}\frac{\sigma^2}{g_{nk}}-\omega\sum\limits_{j \in \mathcal{J}}x_{nj}^\circ s\right) \rho \ln2}.
\end{equation}
\end{proposition}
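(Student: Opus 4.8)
The plan is to solve the convex subproblem \textbf{P2} — which is \textbf{P1} with the caching placement frozen at a feasible $\{x_{nj}^\circ\}$ and the (here non‑binding) backhaul constraint~(\ref{constraint:backhaul constraint}) dropped — through its KKT conditions, and then to eliminate the dual variable $\mu_n$ by invoking the tightness of the per‑ONU‑AP power budget established in~(\ref{equality of power}). Since the objective is strictly concave and increasing in each $P_{nk}$ while the feasible set is a polytope, any KKT point is the unique global optimum, so it suffices to extract $P_{nk}^*$ from~(\ref{necessary condition})--(\ref{dual feasibility}).

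First I would rewrite the stationarity condition~(\ref{necessary condition}) as $\dfrac{g_{nk}B}{(\sigma^2+g_{nk}P_{nk})\ln 2}=\rho\mu_n-\varepsilon_{nk}$ and split on the sign of $P_{nk}^*$ via the complementary slackness~(\ref{complementary slackness3}). If $P_{nk}^*>0$ then $\varepsilon_{nk}=0$, and solving the resulting scalar equation for $P_{nk}$ gives $P_{nk}^*=\dfrac{B}{\rho\mu_n\ln 2}-\dfrac{\sigma^2}{g_{nk}}$, which is indeed positive exactly when $\dfrac{B}{\rho\mu_n\ln 2}>\dfrac{\sigma^2}{g_{nk}}$; if instead $P_{nk}^*=0$, then $\varepsilon_{nk}\ge 0$ forces $\rho\mu_n\ge\dfrac{g_{nk}B}{\sigma^2\ln 2}$, i.e. $\dfrac{B}{\rho\mu_n\ln 2}\le\dfrac{\sigma^2}{g_{nk}}$. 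Merging the two cases yields the water‑filling form $P_{nk}^*=\bigl(\tfrac{B}{\rho\mu_n\ln 2}-\tfrac{\sigma^2}{g_{nk}}\bigr)^+$, which is the first claim.

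To pin down $\mu_n$ (and hence the water level $\tfrac{B}{\rho\mu_n\ln 2}$), I would note that $\mu_n>0$, for otherwise the water level would be infinite; hence by~(\ref{equality of power}), equivalently by the complementary slackness~(\ref{complementary slackness2}), the power constraint holds with equality. Substituting $P_{nk}^*$ into $\sum_{k\in\Phi_n}\rho P_{nk}^*+\omega\sum_{j\in\mathcal{J}}x_{nj}^\circ s=P_M$ then produces a single linear equation in $1/\mu_n$, and solving and rearranging it gives the stated closed form for $\mu_n$.

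The step I expect to be the main obstacle is the bookkeeping around the $(\cdot)^+$ clipping: the sum in the power‑budget equation really ranges only over the \emph{active} set $\mathcal{A}_n=\{k\in\Phi_n:g_{nk}>\rho\mu_n\sigma^2\ln 2/B\}$ of UEs allocated positive power, so the clean expression for $\mu_n$ holds as stated only when $\mathcal{A}_n=\Phi_n$. The fully general argument needs the standard water‑filling fixed‑point/sorting step — order the UEs by $g_{nk}$ and successively discard the weakest until the induced water level is self‑consistent — after which $|\Phi_n|$ and $\sum_{k\in\Phi_n}$ are replaced by $|\mathcal{A}_n|$ and $\sum_{k\in\mathcal{A}_n}$. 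In the regime considered here (directional, noise‑limited mmWave links with comparable gains sharing one power budget) one typically has $\mathcal{A}_n=\Phi_n$, which is the implicit operating assumption behind the displayed formula; all remaining manipulations are routine algebra.
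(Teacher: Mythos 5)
Your KKT derivation of the water--filling form and the elimination of $\mu_n$ via the tight power budget (\ref{equality of power}) reproduces the first half of the paper's proof, and your remark about the active set $\mathcal{A}_n$ is a fair one: the displayed $\mu_n$ in (\ref{mu n without lamda}) is exact only when every UE in $\Phi_n$ receives positive power, a caveat the paper itself glosses over. However, there is a genuine gap. The proposition is asserted for problem \textbf{P1}, not for the restricted problem \textbf{P2}, and you dispose of the difference by declaring the backhaul constraint (\ref{constraint:backhaul constraint}) ``non-binding'' and the caching placement ``frozen'' without justification. At an optimum of \textbf{P1} the shared backhaul constraint can certainly be active (this is precisely the regime the paper cares about); if you write the KKT conditions of \textbf{P1} directly, the multiplier on (\ref{constraint:backhaul constraint}) enters the stationarity condition alongside $\mu_n$ and changes the water level, so the conclusion does not follow from the \textbf{P2} analysis alone.

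The paper closes this gap with an exchange/contradiction argument that occupies the second half of its proof: if an optimal $\{P_{nk}^*,x_{nj}^*\}$ for \textbf{P1} violated (\ref{optimal Pnk}), one could replace the power allocation by a water-filling one achieving the same per--ONU-AP sum rate with strictly less transmission power (Eqs.~(\ref{contradiction3})--(\ref{contradiction4})), reinvest the freed power in caching additional files under the highest-popularity-first rule, which can only decrease the backhaul occupancy in (\ref{constraint:backhaul constraint}); the new point is therefore feasible and achieves no lower throughput, so an optimizer of \textbf{P1} satisfying (\ref{optimal Pnk}) always exists. You need some version of this step --- or an explicit argument that the backhaul multiplier vanishes at an optimum --- before the proposition is proved as stated; as written, your argument establishes the formula only for \textbf{P2}.
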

\begin{proof}
The closed-form expression Eq. (\ref{optimal Pnk}) can be obtained by solving the convex problem \textbf{P2}. Specifically, according to Eq. (\ref{equality of power}) and the KKT conditions mentioned above, we can obtain the expression for the optimal transmission power $P_{nk}^*$ with respect to Lagrangian multipliers $\mu_n$, as shown in Eq. (\ref{optimal Pnk}). Substituting Eq. (\ref{optimal Pnk}) into Eq. (\ref{equality of power}), we can obtain Eq. (\ref{mu n without lamda}). Note that  problem \textbf{P2} is based on the assumption that the caching placement decision $\{ x_{nj}^{\circ}  \}$ is fixed and the backhaul bandwidth constraint (5c) is not considered. To this end, what we have to prove next is that when the caching placement decision $\{ x_{nj} \}$ is not fixed and the backhaul bandwidth constraint is taken into account, Eq. (\ref{optimal Pnk}) is still the optimal transmission power allocation expression.
This can be proved by contradiction.
Suppose the optimal solution to problem \textbf{P1} is $\{P_{nk}^*,x_{nj}^*\}$ where $\{P_{nk}^*\}$ does not satisfy Eq. (\ref{optimal Pnk}). According to Eq. (\ref{equality of power}), we have $\sum\nolimits_{k \in \Phi_n}\rho P_{nk}^* +\omega\sum\nolimits_{j \in \mathcal{J}}x_{nj}^* s=P_M$. Let $\{P_{nk}^\circ,x_{nj}^*\}$ denote the solution that satisfies Eq. (\ref{optimal Pnk}), which indicates the maximum sum rate of UEs associated to ONU-AP$_n$ without the backhaul constraint. Then we have
\begin{subequations}
\begin{numcases}{}
\sum_{k \in \Phi_n}\log_2\! \Big(1+\frac{g_{nk}P_{nk}^\circ}{\sigma^2}\Big) \! > \! \sum_{k \in \Phi_n}\log_2 \! \Big(1+\frac{g_{nk}P_{nk}^*}{\sigma^2}\Big), \label{contradiction1}\qquad \\
\sum_{k \in \Phi_n}P_{nk}^\circ = \sum_{k \in \Phi_n}P_{nk}^* . \label{contradiction2}
\end{numcases}
\end{subequations}
From (\ref{contradiction1}) and (\ref{contradiction2}), we infer that by adjusting the caching and power allocation solution, there exists another set of solutions $\{P_{nk}^{\circ \circ},x_{nj}^{\circ}\}$ satisfying Eq. (\ref{optimal Pnk}) and can be expressed as
\begin{subequations}
\begin{numcases}{}
\sum_{k \in \Phi_n}\log_2 \! \Big(1+\frac{g_{nk}P_{nk}^{\circ\circ}}{\sigma^2}\Big) \! = \! \sum_{k \in \Phi_n}\log_2 \! \Big(1+\frac{g_{nk}P_{nk}^*}{\sigma^2}\Big), \qquad \label{contradiction3} \\
\sum_{k \in \Phi_n}P_{nk}^{\circ\circ} < \sum_{k \in \Phi_n}P_{nk}^* . \label{contradiction4}
\end{numcases}
\end{subequations}
With Eqs. (\ref{equality of power}) and (\ref{contradiction4}), we obtain $\sum_{j \in \mathcal{J}} x_{nj}^{\circ} > \sum_{j \in \mathcal{J}} x_{nj}^{*}$. According to (\ref{contradiction3}) and the highest-popularity-first caching strategy mentioned above, we infer that the set of solutions $\{P_{nk}^{\circ \circ},x_{nj}^{\circ}\}$ achieves a higher cache hit ratio without reducing the downlink throughput. Therefore, we get the conclusion that  $\{P_{nk}^{\circ \circ},x_{nj}^{\circ}\}$ yields the optimal solution when the caching placement decision $\{ x_{nj} \}$ is not fixed and the backhaul bandwidth constraint is taken into account, where $\{P_{nk}^{\circ \circ},x_{nj}^{\circ}\}$ satisfies Eq. (\ref{optimal Pnk}).
\end{proof}

In Eq. (\ref{optimal Pnk}), $\frac{\sigma^2}{g_{nk}}$ is the inverse of channel gain normalized by the noise variance $\sigma^2$. Eq. (\ref{optimal Pnk}) complies with the form of water-filling method \cite{Iterative-water-filling}. The basic idea of water-filling method is to take advantage of different channel conditions by allocating more transmission power to UEs with better channel conditions. We consider a vessel whose bottom is formed by plotting those values of $\frac{\sigma^2}{g_{nk}}$ for each subchannel $k$. Then, we flood the vessel with water to a depth $\frac{B}{\rho \mu_n \ln2}$. Note that the volume of water is not fixed and it depends on the caching solution $\{x_{nj}^\circ\}$. For a given caching solution, the total amount of water used is then $P_M-(\omega \sum_{j \in \mathcal{J}}x_{nj}^\circ s)$, and the depth of the water at each subchannel $k$ is the optimal transmission power allocated to it. Proposition 1 guarantees that the optimal solution to transmission power allocation still holds with the backhaul constraint, so this method is called volume adjustable backhaul-constrained water-filling method (VABWF).

\subsection{Problem Reformulation and Solution}
In the first stage, we obtain the optimal transmission power allocation method based on VABWF method. In the second stage, we propose a dynamic programming algorithm to perform joint power allocation and caching optimization to maximize the downlink throughput of FiWi networks.

By using Proposition 1 and the VABWF method, we can reduce the solution space of problem \textbf{P1} greatly. If we denote the solution space of problem \textbf{P1} as $\mathcal{A}$, then
\begin{align}\label{definition of set A}
   \mathcal{A} =  \bigg\{ \left\{P_{nk},x_{nj}\right\} \bigg | & P_{nk}=\bigg(\frac{B}{\rho\mu_n \ln2}-\frac{\sigma^2}{g_{nk}}\bigg)^+; \nonumber \\
  &  x_{nj} \geq x_{n(j+1)},j \in \mathcal{J} \bigg\}.
\end{align}
An element of $\mathcal{A}$, denoted as $A_{nj} = \{P_{nk}^\circ,x_{nj}^\circ \} \in \mathcal{A}$, should satisfy
\begin{equation}\label{set A satisfication}
  \sum_{j \in \mathcal{J}}x_{nj}^\circ = j.
\end{equation}

Let $\mu(A_{nj})$ denote whether $A_{nj}$ is chosen to be the solution, i.e.,
\begin{numcases}{\mu(A_{nj})=}
\begin{aligned}
&1, \ A_{nj} \text{ is chosen to be the solution},   \\
&0, \text{ otherwise}. \label{mu Anj definition}
\end{aligned}
\end{numcases}
Let $\varphi(A_{nj})$ denote the backhaul bandwidth occupied by ONU-AP$_n$ with respect to solution $A_{nj}$, which is defined as
\begin{numcases}{\varphi(A_{nj})=}
\begin{aligned}
&\sum_{k \in \Phi_n} p_j (1 - x_{nj})  R_{nk} \big|_{A_{nj}}, \mu(A_{nj})=1, \qquad \\
&\ 0 \  \qquad \qquad \qquad \qquad \quad ,  \mu(A_{nj})=0.
\label{omega Anj definition}
\end{aligned}
\end{numcases}

Likewise, let $\nu(A_{nj})$ denote the sum rate of UEs associated to ONU-AP$_n$ with respect to solution $A_{nj}$,which is defined as
\begin{numcases}{\nu(A_{nj})=}
\begin{aligned}
&\sum_{k \in \Phi_n}R_{nk}\big|_{A_{nj}}, \mu(A_{nj})=1,  \\
&\ 0\  \qquad \qquad \  ,  \mu(A_{nj})=0.
\label{nu Anj definition}
\end{aligned}
\end{numcases}

Then problem \textbf{P1} can be converted into the problem of determining the value of $\mu(A_{nj})$ with the aim of maximizing the downlink wireless access throughput of the FiWi network as follows,
\begin{subequations}
\begin{align}
\textbf{P3: } \max_{\mu(A_{nj})} \label{objective function2}   &\quad\sum_{n\in \mathcal{N}}\sum_{j\in \mathcal{J}}\nu(A_{nj}) \\
\text{s.t.}    &\label{constraint:backhaul constraint2}\quad \sum_{n \in \mathcal{N}}\sum_{j \in \mathcal{J}}\varphi(A_{nj}) \leq C, \\
        &\label{constraint:multiple choice constraint}\quad \sum_{j \in \mathcal{J}} \mu(A_{nj}) \leq 1 ,\quad \forall n\in \mathcal{N},\forall j \in \mathcal{J}, \\
        & \ \quad \mu(A_{nj}) \in \{0,1\}, \quad \forall n\in \mathcal{N},\forall j \in \mathcal{J}.
\end{align}
\end{subequations}

Problem \textbf{P3} is in the form of a multiple-choice knapsack problem (MCKP)\cite{Multiple-choice-knapsack}. Specifically, $N$ ONU-APs are considered as $N$ classes, while the joint caching and power allocation solution $\mathcal{A}_{nj}, j \in \mathcal{J}$ belonging to each class $n$ are considered as $J$ items. The problem is to choose no more than one item from each class such that the profit sum is maximized without exceeding the capacity $C$ in the corresponding backhaul limitation. Under the assumption that the values of $\varphi(A_{nj})$ and $\nu(A_{nj})$ are integers, we design an efficient algorithm through dynamic programming based on VABWF method as outlined in Algorithm \ref{alg2:Optimal Algorithm}.

\renewcommand{\algorithmicrequire}{\textbf{Input:}}
\renewcommand{\algorithmicensure}{\textbf{Output:}}

\normalem  

\begin{algorithm}
\caption{\mbox{Dynamic Programming Algorithm for \textbf{P3}}}
\label{alg2:Optimal Algorithm}
\LinesNumbered
\KwIn {$~\mathcal{N}$,$~\mathcal{J}$,$~\mathcal{K}$,$~B$,$~C$,$~P_M$,$~Q$,$~g_{nk}$,$~\omega$,$~s$,$~\rho$,$~\sigma$,$~\delta$;}

\KwOut {\mbox{$\{P_{nk}^\ast,x_{nj}^\ast\},$ and the downlink throughput;}}

\ForEach {$ n\in \mathcal{N}, j\in \mathcal{J}$}
{
    Calculate $\mu_n$ according to Eq. (\ref{mu n without lamda});

    $P_{nk} \Leftarrow (\frac{B}{\rho\mu_n\ln2}-\frac{\sigma^2}{g_{nk}})^{+}$;

  \mbox{Calculate $\varphi(A_{nj}), \nu(A_{nj})$ according to Eqs. (\ref{omega Anj definition})} \\ and (\ref{nu Anj definition});
}

Calculate $R(1,c), c=0,1,2,\cdots,C$;

\For {$n=2;n\leq N;n++$}
{
\ForEach {$c\in\big \{0,1,2,\cdots,C \big\}$}
{
\eIf {$c-\min\big \{\varphi(A_{nj})|\forall j \in \mathcal{J}\big\}<0$}
 {$R(n,c)\Leftarrow R(n-1,c)$;}
 {Update $R(n,c)$ according to Eq. (\ref{recursive formula});}

\If {$R(n,c)\neq R(n-1,c)$}
 {Record the index $j$ that satisfies $\zeta(n,c) \Leftarrow \arg\max_j\big\{\{R(n-1,c)\}\cup\{R(n-1,c-\varphi(A_{nj}))+\nu(A_{nj})\}\big\}$;}
}
}
\For {$n=N;n\geq1;n--$}
{
\If {$R(n,c)\neq R(n-1,c)$}
{The optimal item index for the $n$-th class in the case of backhaul bandwidth $c$ is obtained by $\zeta(n,c)$, then let $c \Leftarrow c-\varphi(A_{nj})$;}
}

Find the solution $A_{nj}$ according to  Eq. (\ref{definition of set A}) in the reverse order of n;

\Return {$\{P_{nk}^\ast,x_{nj}^\ast\}$}  and the downlink throughput;
\end{algorithm}

In Algorithm 1, define $R(n,c)$ to be the maximum downlink throughput of the FiWi network where there exist only the first $n$ classes with backhaul limitation $c$. Then we can consider an additional class to calculate the corresponding maximum throughput and the following recursive formula describes how the iterative method is performed
\begin{align}\label{recursive formula}
  R(n,c) = & \max \bigg\{ \big\{R(n-1,c-\varphi(A_{nj}))+ \nu(A_{nj}) \big\} \cup \nonumber \\
   &  \big\{R(n-1,c)\big\}     \Big| c-\varphi(A_{nj}) \geq 0, \forall j \in \mathcal{J}  \bigg\}.
\end{align}

Algorithm 1 adopts the concept of dynamic programming by progressing from smaller subproblems to larger subproblems, i.e., in a bottom-up manner. Specifically, steps 1-5 compute the backhaul bandwidth occupancy and the sum rate of UEs associated to ONU-AP$_n$  with respect to solution $A_{nj}$. Step 6 is the initialization of the dynamic programming algorithm, which obtains the solution to the smallest subproblem with only one ONU-AP, i.e., $n = 1$. Steps 7-14 are the dominated part of running time, which adopts dynamic programming to iteratively solve the subproblems from $n = 2$ to $n = N$. The reason we adopt dynamic programming is that problem \textbf{P3} has the property of an optimal substructure \cite{Dynamic-programming}, which means that an optimal solution to the problem is built from the optimal solutions of smaller problems having the same structure as the original.
Note that the constraint (\ref{constraint:multiple choice constraint}) is satisfied by placing the recursive formula in the innermost loop in step 14. In each iteration, we choose the optimum solution to the given number of classes $n$ and bandwidth limitation $c$.
Steps 15-18 obtain the joint caching and power allocation solution in a backtracking method.

For time complexity, the running time of Algorithm \ref{alg2:Optimal Algorithm} is dominated by steps 7-14. The $N$ ONU-APs in the first for-loop lead to $N$ subproblems, and the fiber backhaul capacity $C$ in the second for-loop leads to $C$ subproblems. In step 14, at most $J$ files would be traversed to compute a new solution of a subproblem. Thus the overall time complexity of Algorithm 1 is $O(NCJ)$.

\textit{Discussion}: If UEs are mobile, due to the UEs' communication handover between different ONU-APs, the number of UEs served by an ONU-AP may change. Thus, the joint caching and power allocation optimization algorithm should be rerun to update the corresponding solution when UEs' communication handover occurs. Specifically, by using the proposed VABWF method at the first stage, the expression of optimal transmission power $P_{nk}^{*}$ allocated to UE$_k$ by ONU-AP$_n$ is obtained. Since the UEs' communication handover might change the caching placement solution, the dynamic programming algorithm at the second stage should be rerun for the cache update. Note that the cache update operation can be performed when there exists idle backhaul bandwidth. When the backhaul bandwidth is fully occupied due to the UEs' requests for uncached files, then the first stage of Algorithm 1 can be executed based on a fixed caching placement decision $\{ x_{nj}^{\circ} \}$, thus obtaining a feasible power allocation solution. The dynamic programming algorithm at the second stage of Algorithm 1 will be executed when the backhaul bandwidth is available for the cache update operation.

\section{Throughput Upper Bound Analysis}
To investigate the capacity of the FiWi network with mmWave, we attempt to obtain the downlink throughput upper bound at ONU-APs. Such an upper bound can also be used to validate the effectiveness of the proposed algorithm in Section \ref{Section:joint-power-allocation-and-caching}.

\subsection{Average rate of mmWave Links}
Throughout the analysis in this section, the spatial distribution of UEs is assumed to follow independent Poisson point process (PPP) of density $\lambda$. This assumption has gained recognition due to its tractability \cite{Energy-efficiency-of, Analysis-on-cache-enabled-wireless, D2D-aware-device}, which enables us to utilize tools from stochastic geometry to quantitatively analyze the average performance of the network and is helpful to derive the upper bound of the downlink throughput in FiWi networks. Before obtaining the throughput upper bound, we firstly analyze and derive the downlink average rate of mmWave links.
Since each UE is associated with the closest AP, the probability density function (pdf) of $r$ can be given by $f_r(r) = \frac{2 r}{D^2}$.

Under the assumption of knowledge of statistical CSI, the average rate of mmWave links is the expectation with respect to the random variable $r$ and $h$, which can be written as \cite{Analysis-on-cache-enabled-wireless}
\begin{align}\label{tauPnk}
 \tau & = \mathbb{E}_{r} \left[ \mathbb{E}_{\Upsilon_{nk}} [ \log_2(1 + \Upsilon_{nk}) ] \right] \nonumber \\
   & = \int_{0}^{D} \mathbb{E}_{h}[\log_2(1 + \Upsilon_{nk})] f_r(r) \mathrm{d}r \nonumber \\
   & \overset{(a)}{=} \int_{0}^{D} \!\!\! \int_{0}^{\infty} \! \mathbb{P}\bigg[\log_2(1 + \frac{P_{nk} h G r^{-\alpha_m}}{\sigma^2})>t\bigg] \mathrm{d}t f_r(r) \mathrm{d}r,
\end{align}
where (a) holds because $\Upsilon_{nk}$ is a strictly positive random variable.

\begin{proposition}\label{proposition-ergodic-rate}
The average rate of mmWave links is given by
\begin{align}\label{tauPT1}
 & \tau(P_n^T,\lambda, D) \nonumber \\
  = & \int_{0}^{D} \! \int_{0}^{\infty} \!\! \sum_{i \in \{ \mathrm{L,N} \}}\sum_{m = 1}^{N_i} \gamma(m, r)  \mathrm{e}^{- \psi(P_n^T,\lambda, D)} f_r(r) \mathrm{d}t \mathrm{d}r,
\end{align}
where $\gamma(m, r) = (-1)^{m+1} \binom{N_i}{m} \rho_{i}(r)$, $\psi(P_n^T,\lambda, D) = \frac{m \eta_i r^{\alpha_i} (2^t + G - 1)  } {U_i + V P_n^T}$, $U_i = \frac{G N_i (D/2)^{\alpha_i} } {N_i - 1}$, $V = \frac{G} {\sigma^2 \lambda \pi D^2}$, $\eta_i = N_i (N_i!)^{-1 / N_i }$, $P_n^T$ denotes the total transmission power consumption at ONU-AP$_n$.
\end{proposition}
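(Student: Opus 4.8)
The plan is to take the integral identity for $\tau$ already established in~(\ref{tauPnk}) and evaluate the inner tail probability $\mathbb{P}[\log_2(1+\Upsilon_{nk})>t]$ in closed form through four moves: substitute the optimal water-filling power from Proposition~\ref{Theorem1}, condition on the LOS/NLOS state, apply Alzer's inequality to the Nakagami fading tail, and finally replace the per-UE power by a deterministic surrogate driven by the aggregate transmit power $P_n^T$. The first move is where the slightly unexpected factor $2^{t}+G-1$ is born: with $g_{nk}=h_{nk}r_{nk}^{-\alpha_m}$ and $P_{nk}^{*}=\big(\frac{B}{\rho\mu_n\ln2}-\frac{\sigma^{2}}{g_{nk}}\big)^{+}$ from~(\ref{optimal Pnk}), on the active set the SNR telescopes, $\Upsilon_{nk}=\frac{B G g_{nk}}{\sigma^{2}\rho\mu_n\ln2}-G$, so $\log_2(1+\Upsilon_{nk})>t$ is equivalent to $h_{nk}>\frac{(2^{t}+G-1)\sigma^{2}\rho\mu_n\ln2\,r^{\alpha_m}}{BG}$.

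Next I would condition on blockage: the link is LOS (resp.\ NLOS) with probability $\rho_i(r)$, carrying path-loss exponent $\alpha_i$ and Nakagami parameter $N_i$, $i\in\{\mathrm{L},\mathrm{N}\}$, giving $\mathbb{P}[\log_2(1+\Upsilon_{nk})>t]=\sum_{i}\rho_i(r)\,\mathbb{P}[h_i>x_i]$ with $x_i=\frac{(2^{t}+G-1)\sigma^{2}\rho\mu_n\ln2\,r^{\alpha_i}}{BG}$. For the normalized Gamma variable $h_i$ of integer shape $N_i$, Alzer's inequality supplies the tight estimate $\mathbb{P}[h_i>x]\approx 1-(1-\mathrm{e}^{-\eta_i x})^{N_i}$ with $\eta_i=N_i(N_i!)^{-1/N_i}$, which the binomial theorem turns into $\sum_{m=1}^{N_i}(-1)^{m+1}\binom{N_i}{m}\mathrm{e}^{-m\eta_i x}$. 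Combining, the integrand of~(\ref{tauPnk}) becomes $\sum_{i\in\{\mathrm{L},\mathrm{N}\}}\sum_{m=1}^{N_i}(-1)^{m+1}\binom{N_i}{m}\rho_i(r)\,\mathrm{e}^{-m\eta_i x_i}=\sum_i\sum_m\gamma(m,r)\,\mathrm{e}^{-m\eta_i x_i}$, matching the $\gamma(m,r)$ prefactor and $\eta_i$ of $\psi$.

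It then remains to rewrite $m\eta_i x_i$ as $\psi(P_n^T,\lambda,D)$. Writing $W:=\frac{B}{\rho\mu_n\ln2}$ for the water level, $m\eta_i x_i=\frac{m\eta_i r^{\alpha_i}(2^{t}+G-1)}{GW/\sigma^{2}}$, so the task reduces to identifying $GW/\sigma^{2}$ with $U_i+VP_n^T$. By~(\ref{mu n without lamda}), $W=\frac{P_n^T+\sum_{k'\in\Phi_n}\sigma^{2}/g_{nk'}}{|\Phi_n|}$ with $P_n^T=P_M-\omega\sum_j x_{nj}^{\circ}s$ the transmission-power consumption. Under the PPP assumption I would replace $|\Phi_n|$ by its mean $\lambda\pi D^{2}$ and the random channel-inverse sum by $|\Phi_n|$ times its average, evaluated with the typical-distance surrogate $r\approx D/2$ and $\mathbb{E}[1/h_i]=\frac{N_i}{N_i-1}$, so that $\frac{1}{|\Phi_n|}\sum_{k'}\sigma^{2}/g_{nk'}\approx\sigma^{2}(D/2)^{\alpha_i}\frac{N_i}{N_i-1}$. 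Then $\frac{GW}{\sigma^{2}}\approx\frac{G}{\sigma^{2}\lambda\pi D^{2}}P_n^T+\frac{G N_i (D/2)^{\alpha_i}}{N_i-1}=VP_n^T+U_i$, hence $m\eta_i x_i=\psi(P_n^T,\lambda,D)$, and substituting back into~(\ref{tauPnk}) gives~(\ref{tauPT1}).

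The hard part is this last move: collapsing the channel-dependent per-UE water-filling power into a deterministic function of the aggregate $P_n^T$, which is what manufactures the non-obvious constants $U_i$ and $V$. It leans on two approximations that should be stated explicitly --- concentration of the Poisson channel-inverse sum around $|\Phi_n|$ times its mean, and the ``typical distance'' $r\approx D/2$ inside that aggregate term (while $r$ is kept exact in the desired link). Secondary soft spots are that Proposition~\ref{Theorem1} presumes the active water-filling regime $P_{nk}^{*}>0$ and that Alzer's inequality is used here as an equality; both are conventional in this literature but worth flagging.
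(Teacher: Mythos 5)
Your proposal follows essentially the same route as the paper's Appendix~A proof: start from the integral form of $\tau$, condition on the LOS/NLOS state, substitute the water-filling power from Proposition~1 (which is exactly where $2^t+G-1$ and the denominator $U_i+VP_n^T$ arise via $\mathbb{E}[|\Phi_n|]=\lambda\pi D^2$, the $(D/2)^{\alpha_i}$ surrogate, and $\mathbb{E}[h^{-1}]=N_i/(N_i-1)$), then apply the tight Gamma tail bound (Alzer) and the binomial theorem. The only differences are the order of the blockage-conditioning and power-substitution steps and that you flag explicitly the approximations the paper treats as equalities, which is a fair and accurate reading.
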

\begin{proof}
Please refer to Appendix \ref{proof-of-Proposition-2}.
\end{proof}

Since the average rate of UEs is taken over both the spatial PPP and the fading distribution, and we do not assume specific values for the fading coefficient and the path-loss exponent, Eq. (\ref{tauPT1}) is in an integral form. Nevertheless, it can be efficiently computed by using numerical integration toolbox as opposed to the usual Monte Carlo methods that rely on repeated random sampling to compute the results.
Besides, we take into account the impact of transmission power allocation on the network throughput and assign the optimal transmission power levels to the users, which maximizes the average rate of mmWave links. Note that the variable $P_n^T$ does not refer to the transmission power assigned to a typical user, but to the total transmission power consumption at ONU-AP$_n$. If the total transmission power consumption is determined, according to Eq. (\ref{equality of power}), the caching power consumption can be calculated. Then, the cache hit ratio can be obtained, thereby obtaining the backhaul bandwidth occupied by the UEs associated to ONU-AP$_n$, and this provides the theoretical guidance for trading the power allocated to caching for that assigned to transmission.

\subsection{Theoretical Downlink Throughput Upper Bound}
In this subsection, we obtain a theoretical upper bound of the throughput in FiWi networks by exploiting several analytical properties related to the average rate derived above.

We first define the sum rate of UEs associated with ONU-AP$_n$ as
\begin{equation}\label{Rn}
R_n \triangleq |\Phi_n| \tau(P_n^T,\lambda,D),
\end{equation}
where $|\Phi_n|$ denotes the number of UEs associated with ONU-AP$_n$. Then, the backhaul bandwidth occupied by these $|\Phi_n|$ UEs can be written as
\begin{align}\label{Cn}
  C_n =  (1-p_n^{hit})R_n = p_n^{miss} R_n,
\end{align}
where $p_n^{hit}$, $p_n^{miss}$ denote the cache hit ratio and cache miss ratio at ONU-AP$_n$, respectively. Then $p_n^{hit}$ can be calculated as
\begin{align}\label{pnhit}
  p_n^{hit}=\sum\limits_{j=1}^{\big \lfloor \frac{P_n^{ca}}{\omega}\big \rfloor}p_j =\sum\limits_{j=1}^{\big \lfloor \frac{P_M-P_n^T}{\omega}\big \rfloor}p_j.
\end{align}
Note that $p_n^{hit}$ is a concave function of $P_n^T$. This is due to the fact that $p_j < p_{j+1}$ according to the Zipf distribution of file popularity, and files with higher popularity are cached preferentially. Therefore, the cache hit ratio will decrease faster as the transmission power increases (i.e., the caching power decreases).
\begin{lemma}
$\tau(P_n^T,\lambda,D)$ is a monotonically increasing function of $P_n^T$, and $C_n$ is a monotonically increasing function of $P_n^T$.
\end{lemma}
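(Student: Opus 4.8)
The plan is to treat the two monotonicity claims in turn, using the closed form of Proposition~\ref{proposition-ergodic-rate} for $\tau$ and the definition of $C_n$ for the second. For $\tau$, I would first fix $r\in(0,D)$ and $t>0$ and exploit the structural fact that, for each fading type $i\in\{\mathrm{L,N}\}$, the exponent in Eq.~(\ref{tauPT1}) is \emph{linear in the summation index} $m$: writing $K_i(P_n^T)\triangleq\frac{\eta_i r^{\alpha_i}(2^t+G-1)}{U_i+VP_n^T}$ one has $\psi=mK_i$, so $\mathrm{e}^{-\psi}=(\mathrm{e}^{-K_i})^{m}$. The alternating inner sum then collapses by the binomial theorem, $\sum_{m=1}^{N_i}(-1)^{m+1}\binom{N_i}{m}y^{m}=1-(1-y)^{N_i}$ with $y=\mathrm{e}^{-K_i}$, so the $(r,t)$-integrand of $\tau$ reduces to $\sum_{i\in\{\mathrm{L,N}\}}\rho_i(r)\big[1-(1-\mathrm{e}^{-K_i})^{N_i}\big]$. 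Note $K_i>0$, since $t\ge 0$ and $G>0$ give $2^t+G-1\ge G>0$ and $V=\frac{G}{\sigma^2\lambda\pi D^2}>0$, so $y\in(0,1)$.

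Next I would read off the $P_n^T$-dependence. Because $V>0$, the denominator $U_i+VP_n^T$ is strictly increasing in $P_n^T$, so $K_i$ is strictly decreasing, $y=\mathrm{e}^{-K_i}$ strictly increasing, $1-y$ strictly decreasing in $(0,1)$, and therefore $1-(1-y)^{N_i}$ strictly increasing in $P_n^T$. The weights $\rho_i(r)\ge 0$ and $f_r(r)=2r/D^2>0$ are independent of $P_n^T$, so the integrand of Eq.~(\ref{tauPT1}) is non-decreasing in $P_n^T$ (strictly increasing wherever $\rho_i(r)>0$, hence for all $r>0$), and integrating over $t\in(0,\infty)$ and $r\in(0,D)$ preserves this ordering; hence $\tau(P_n^T,\lambda,D)$ is monotonically increasing in $P_n^T$.

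For $C_n$, I would write $C_n=p_n^{miss}R_n=(1-p_n^{hit})\,|\Phi_n|\,\tau(P_n^T,\lambda,D)$ and show it is a product of two nonnegative non-decreasing functions of $P_n^T$. The factor $|\Phi_n|\,\tau(P_n^T,\lambda,D)$ is positive and increasing by the previous paragraph. For $1-p_n^{hit}$, recall from Eq.~(\ref{pnhit}) that $p_n^{hit}=\sum_{j=1}^{\lfloor(P_M-P_n^T)/\omega\rfloor}p_j$; as $P_n^T$ grows, the summation limit $\lfloor(P_M-P_n^T)/\omega\rfloor$ is non-increasing, so (since all $p_j>0$) $p_n^{hit}$ is non-increasing and $p_n^{miss}=1-p_n^{hit}\ge 0$ is non-decreasing. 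A product of nonnegative non-decreasing functions is non-decreasing, which gives the claim.

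The step I expect to be the main obstacle is the first one: because the coefficients $\gamma(m,r)=(-1)^{m+1}\binom{N_i}{m}\rho_i(r)$ alternate in sign, one cannot simply differentiate Eq.~(\ref{tauPT1}) under the integral sign and deduce positivity of the derivative term by term. The resolution — and the crux of the argument — is precisely the observation that $\psi$ is linear in $m$, which lets the alternating sum be resummed to $\rho_i(r)\big(1-(1-\mathrm{e}^{-K_i})^{N_i}\big)$, whose monotonicity in $P_n^T$ is then transparent. For $C_n$, the only mild complication is the non-smoothness introduced by the floor in $p_n^{hit}$, which precludes a derivative argument but is harmless for the monotonicity-of-partial-sums reasoning used above.
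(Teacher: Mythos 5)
Your proof is correct and follows essentially the same route as the paper's, which simply asserts the monotonicity of $\tau$ from Eq.~(\ref{tauPT1}) and then multiplies the positive increasing factors $p_n^{miss}$ and $\tau$ to handle $C_n$. Your two refinements --- resumming the alternating series back to $1-(1-\mathrm{e}^{-K_i})^{N_i}$ (i.e., undoing step (b) of Appendix~\ref{proof-of-Proposition-2}) so that monotonicity in $P_n^T$ becomes transparent, and replacing the paper's claim $\partial p_n^{miss}/\partial P_n^{T}>0$ (which fails pointwise because of the floor function) by a monotonicity-of-partial-sums argument --- supply exactly the details the paper leaves implicit.
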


\begin{proof}
According to (\ref{tauPT1}), we know that $\tau(P_n^T,\lambda,D)$ is a monotonically increasing function of $P_n^T$.
With Eq. (\ref{pnhit}), we can get that $\frac{\partial p_n^{miss}}{\partial P_n^{T}}>0$. Since both positive-valued function $p_n^{miss}$ and $\tau(P_n^T,\lambda,D)$ increase with $P_n^T$, we can prove that $C_n$ is a monotonically increasing function of $P_n^T$.
\end{proof}

In order to achieve a higher average downlink rate, the transmission power allocated to UEs needs to be increased, so it is intuitive that the average downlink rate monotonously increases with the total transmission power. On the other hand, the increase of transmission power results in the decrease of caching power, which will cause more fiber backhaul bandwidth occupancy. This gives us the insight that it does not make sense to increase the transmission power when the backhaul bandwidth becomes a bottleneck of the network throughput, so increasing the transmission power does not necessarily increase the network throughput.

Next, we would like to explore the relationship between the sum rate of $|\Phi_n|$ UEs associated with ONU-AP$_n$ and the backhaul bandwidth occupied by these UEs. Let $R_n$ be regarded as a function of $C_n$, i.e., $R_n = f(C_n)$, then the concavity property of $f(C_n)$ can be found.
\begin{lemma}
For a given coverage radius $D$ of ONU-APs and density of UEs $\lambda$ in the FiWi network, $R_n$ is a concave function of $C_n$.
\end{lemma}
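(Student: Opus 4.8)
The plan is to treat every quantity as a function of the single parameter $P_n^T$, pass from the parametric curve $P_n^T\mapsto(C_n,R_n)$ to the function $R_n=f(C_n)$, and show $f''\le 0$ by a chain‑rule computation. By Lemma~2, $C_n$ is strictly increasing in $P_n^T$, so $P_n^T$ is a smooth function of $C_n$; working on the continuous relaxation of (\ref{pnhit}) (interpolating $\lfloor(P_M-P_n^T)/\omega\rfloor$ smoothly, or restricting attention to its breakpoints), $f$ is twice differentiable, and with $'$ denoting $\mathrm{d}/\mathrm{d}P_n^T$ the standard parametric formula gives
\begin{equation*}
 f''(C_n)=\frac{R_n''\,C_n'-R_n'\,C_n''}{(C_n')^3}.
\end{equation*}
Since $C_n'>0$, it suffices to prove $R_n''C_n'-R_n'C_n''\le 0$.

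Next I would substitute $R_n=|\Phi_n|\,\tau$ and $C_n=|\Phi_n|\,p_n^{miss}\tau=|\Phi_n|(1-p_n^{hit})\tau$, expand $C_n'$ and $C_n''$ by the product rule, and cancel. After the $\tau''q\tau'-\tau'q\tau''$ terms drop out (with $q=p_n^{miss}$), the inequality to be shown becomes, up to the positive factor $|\Phi_n|^2$,
\begin{equation*}
 \tau\,(p_n^{miss})'\,\tau''\;-\;\tau\,(p_n^{miss})''\,\tau'\;-\;2\,(p_n^{miss})'\,(\tau')^2\;\le\;0 .
\end{equation*}
Now I read off the signs: $\tau>0$; $\tau'>0$ by the monotonicity of (\ref{tauPT1}) (Lemma~2); $(p_n^{miss})'>0$, which is exactly the inequality used in the proof of Lemma~2; and $(p_n^{miss})''\ge 0$ because $p_n^{hit}$ is concave in $P_n^T$ (the remark preceding Lemma~2), so $p_n^{miss}=1-p_n^{hit}$ is convex. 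Hence the last two terms are already $\le 0$, and the whole left‑hand side is $\le 0$ provided $\tau''\le 0$, i.e.\ provided $\tau(P_n^T,\lambda,D)$ is concave in $P_n^T$.

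The main obstacle is therefore to establish that the average rate $\tau$ is concave in the total transmission power $P_n^T$. I would argue this from the operational meaning of $\tau$ in (\ref{tauPT1}): since the per‑user optimal (VABWF) allocation of Proposition~1 is assumed, $|\Phi_n|\tau(P_n^T,\lambda,D)$ equals the expectation — over the user PPP, the distances and the Nakagami fading — of $\max\{\sum_{k\in\Phi_n}R_{nk}:\sum_{k}P_{nk}=P_n^T,\ P_{nk}\ge 0\}$. For each realization this is the maximum of a sum of functions $B\log_2(1+g_{nk}P_{nk}/\sigma^2)$, each concave in $P_{nk}$, over a linear budget set, hence concave in the budget $P_n^T$; concavity survives the expectation, so $\tau$ is concave in $P_n^T$ and $\tau''\le 0$, completing the proof. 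A more computational alternative is to differentiate (\ref{tauPT1}) twice directly: the dependence on $P_n^T$ enters only through $\psi=\tfrac{m\eta_i r^{\alpha_i}(2^t+G-1)}{U_i+VP_n^T}$, a decreasing convex function of $P_n^T$, and one bounds the resulting integrand after using the resummation $\sum_{m}\gamma(m,r)\mathrm{e}^{-\psi}=\rho_i(r)\big(1-(1-\mathrm{e}^{-\psi_{1,i}})^{N_i}\big)$, with $\psi_{1,i}$ the $m=1$ value of $\psi$; this pointwise sign control is the delicate part, because the alternating weights $\gamma(m,r)$ obstruct a naive term‑by‑term argument. Finally, I would record that even $\tau''\le 0$ is more than needed: the displayed inequality already follows from $\tau\tau''\le 2(\tau')^2$, equivalently from convexity of $1/\tau$ in $P_n^T$, which is weaker and is likewise implied by $\tau$ being positive, increasing and concave.
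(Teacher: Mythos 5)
Your reduction is correct, and it takes a genuinely different route from the paper's. The paper argues with secant slopes: writing $C_n=p_n^{miss}R_n$ and using only the monotonicity of $p_n^{miss}$ in $P_n^T$, it bounds $(R_n^2-R_n^1)/(C_n^2-C_n^1)$ above by $1/p_n^{miss}\big(P_{n(1)}^T\big)$, identifies that bound with $f'(C_n^1)$, and reads off the global first-order over-estimator property. Your chain-rule computation is sharper and, in my view, closer to what is actually needed: the cancellation leading to $\tau q'\tau''-\tau q''\tau'-2q'(\tau')^2\le 0$ (with $q=p_n^{miss}$) is verified, and it exposes that monotonicity of $q$ and concavity of $p_n^{hit}$ alone do \emph{not} suffice --- one also needs $\tau\tau''\le 2(\tau')^2$, e.g.\ concavity of $\tau$ in $P_n^T$. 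Indeed the exact derivative is $f'(C_n)=1/q-R_nq'/(qC_n')<1/q$, so the paper's identification $f'(C_n)=1/p_n^{miss}$ is only an upper bound, and an upper bound on secant slopes does not by itself exclude convex stretches of $f$ (take $q(P)=P$ and $\tau(P)=\mathrm{e}^{P^2}$ near $P=0$: every inequality in the paper's display holds, yet $R_n''C_n'-R_n'C_n''>0$ there). Your approach isolates exactly the hypothesis that closes this hole --- a hypothesis the paper itself invokes elsewhere, since Appendix \ref{AppendixC} asserts $\partial^2\tau/\partial(P_n^T)^2<0$ without proof.

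The one soft spot is your justification of $\tau''\le 0$. The value-function argument (the maximum of a sum of logarithms over a linear budget set is concave in the budget, and mixing over the PPP and fading preserves concavity) is sound for the exact conditional average rate, but the closed form (\ref{tauPT1}) is not literally that expectation: its derivation replaces $\sum_{k\in\Phi_n}\sigma^2/g_{nk}$ by its mean and uses the tight lower bound on the Gamma CCDF, so concavity of the resulting formula does not follow automatically from concavity of the underlying operational quantity. You correctly note that the term-by-term route fails ($\mathrm{e}^{-a/(b+cx)}$ is concave only where the exponent is at most $2$, and the weights alternate in sign), so a fully rigorous treatment of (\ref{tauPT1}) would have to work with the resummed CCDF; since the paper takes this concavity for granted anyway, recording it as an explicit hypothesis --- or, as you observe, the weaker and cleaner sufficient condition that $1/\tau$ be convex in $P_n^T$ --- is acceptable. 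Two small corrections: the monotonicity of $C_n$ you invoke is Lemma 1, not Lemma 2, and $p_n^{miss}$ as defined in (\ref{pnhit}) is piecewise constant, so your appeal to a smooth interpolation (or to working at the breakpoints) should be stated as part of the hypotheses rather than in passing.
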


\begin{proof}
 Please refer to Appendix \ref{AppendixA}.
\end{proof}

The significance of the concave function is that it implies the non-linear relationship between the fiber backhaul bandwidth occupancy and the downlink throughput. Namely, the increase of the fiber backhaul bandwidth occupancy results in a slower increase of the wireless downlink throughput. Considering that all the ONU-APs share the same throughput-limited fiber backhaul, we get an insight that each ONU-AP should not occupy much more backhaul bandwidth than other ones. This is because a much higher backhaul bandwidth occupancy does not lead to a much higher increase of wireless downlink throughput. Thus, an intuitive approach would be to make all the ONU-APs occupy equal fiber backhaul bandwidth such that the wireless downlink throughput is maximized. It turns out that this approach is valid when the UEs follow a uniform spatial distribution, and this particular spatial distribution also serves as a necessary and sufficient condition to ensure that the actual throughput can reach the theoretical upper bound derived below.

\begin{proposition}  \label{proposition-upperbound}
  The downlink throughput of the FiWi network is upper-bounded by
\begin{equation} \label{upperbound}
   R^+ \! = \! \mathrm{min} (\lambda \pi D^2 N \tau (P_n^T, \lambda,D), C + p_n^{hit} \lambda \pi D^2 N \tau (P_n^T, \lambda,D)).
\end{equation}
\end{proposition}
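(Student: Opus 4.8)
The plan is to bound the (expected) downlink throughput $R=\mathbb{E}\big[\sum_{n\in\mathcal{N}}\sum_{k\in\Phi_n}R_{nk}\big]$ by two independent arguments — a wireless-limited one and a backhaul-limited one — and then observe that $R$ must obey both simultaneously, hence lies below their minimum. Throughout I would use that the UEs form a PPP of density $\lambda$ on a disk of radius $D$, so $\mathbb{E}[|\Phi_n|]=\lambda\pi D^2$, together with the fact (Proposition 1 / the VABWF solution) that for a fixed total transmission budget $P_n^T$ at ONU-AP$_n$ the maximal average sum rate of its users is $R_n=|\Phi_n|\,\tau(P_n^T,\lambda,D)$, with $\tau$ the quantity of Eq. (\ref{tauPT1}).

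For the wireless-limited bound I would argue that no more content can be delivered than the wireless links carry, i.e. $R\le\sum_{n}R_n$; taking expectations, $\mathbb{E}\big[\sum_n R_n\big]=\sum_n\mathbb{E}[|\Phi_n|]\,\tau(P_n^T,\lambda,D)=\lambda\pi D^2 N\,\tau(P_n^T,\lambda,D)$, where by Lemma 1 and $P_n^{ca}\ge0$ (so $P_n^T\le P_M$) this is largest exactly in the regime where the backhaul imposes no throttling. This produces the first term of $R^+$.

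For the backhaul-limited bound I would use the decomposition $R_n=C_n+p_n^{hit}R_n$ implicit in Eqs.\ (\ref{Cn})--(\ref{pnhit}) — cache-miss traffic carried by the feeder fiber plus cache-hit traffic served locally — so that summing over $n$ and invoking $\sum_n C_n\le C$ gives $R=\sum_n C_n+\sum_n p_n^{hit}R_n\le C+\sum_n p_n^{hit}R_n$. To pin down the last sum I would invoke Lemma 2: writing $R_n=f(C_n)$ with $f$ concave and increasing, Jensen's inequality yields $\sum_n f(C_n)\le N f\!\big(\tfrac1N\sum_n C_n\big)\le N f(C/N)$, and evaluating $N f(C/N)$ through $R_n=|\Phi_n|\tau(P_n^T,\lambda,D)$, $C_n=(1-p_n^{hit})R_n$ and $\mathbb{E}[|\Phi_n|]=\lambda\pi D^2$ at the symmetric power level for which the backhaul is exactly full reproduces $C+p_n^{hit}\lambda\pi D^2 N\,\tau(P_n^T,\lambda,D)$. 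Combining the two bounds gives $R\le\min(\cdot,\cdot)=R^+$, and the concavity step additionally shows that the equal-backhaul allocation among ONU-APs is optimal, which is what makes the bound attainable under a uniform user distribution as noted after the statement.

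The step I expect to be the main obstacle is the bookkeeping that ties Lemma 2 to the explicit second term: showing that $N f(C/N)$ really collapses to $C+p_n^{hit}\lambda\pi D^2N\tau(P_n^T,\lambda,D)$, which requires reading the overloaded symbol $P_n^T$ as the maximal transmit power $P_M$ in the first (backhaul-slack) term but as the backhaul-equalizing power in the second, and checking that the monotonicity and concavity from Lemmas 1 and 2 make $\min$ the correct envelope of the two regimes rather than losing a factor. A secondary subtlety is justifying that the expectation commutes with $\sum_n|\Phi_n|\tau(\cdot)$, i.e.\ that the per-UE average rate $\tau$ is independent of the random cell population, which follows from the i.i.d.\ placement of UEs conditioned on the PPP count (consistent with the radial pdf $f_r(r)=2r/D^2$).
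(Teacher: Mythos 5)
Your proposal is correct and lands on the same min-of-two-regimes structure as the paper, but the two proofs are organized differently. The paper's Appendix B builds everything around two applications of Jensen's inequality: first $R=\sum_n f(C_n)\le Nf(\tfrac1N\sum_n C_n)$ using the concavity of $f$ from Lemma 2 (equality iff all $C_n$ are equal), then $R=\bar C\sum_n[p_n^{miss}(P_n^T)]^{-1}\le \bar C N[p_n^{miss}(\bar P^T)]^{-1}$ (equality iff all $P_n^T$ are equal); combining the two equality conditions forces $|\Phi_n|=\lambda\pi D^2$ and yields the wireless-limited term, after which the backhaul-limited term is stated in one line. You instead obtain the wireless-limited term directly by taking the PPP expectation $\mathbb{E}[\sum_n|\Phi_n|]\tau=\lambda\pi D^2N\tau$, and you obtain the backhaul-limited term from the traffic decomposition $R_n=C_n+p_n^{hit}R_n$ together with $\sum_nC_n\le C$, which is actually a cleaner and more explicit justification of the second term than the paper gives. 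What your route loses is what the paper's Jensen-based organization buys: the explicit identification of the equality conditions (equal backhaul occupancy, equal transmit power, hence uniform $|\Phi_n|$) that make the bound achievable, which the paper leans on in the discussion after the proposition and in Corollary 1. Two small points to tighten: the detour through $Nf(C/N)$ is unnecessary once you have the decomposition, since $\sum_np_n^{hit}R_n$ is bounded by the same expectation argument as the first term; and your worry about reading $P_n^T$ as $P_M$ in one term and as a backhaul-equalizing power in the other is a misreading --- in the paper $R^+$ is a function of a single common $P_n^T$, and the optimization over $P_n^T\in[0,P_M]$ is deferred entirely to Corollary 1.
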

\begin{proof}
Please refer to Appendix \ref{AppendixB}.
\end{proof}
Proposition \ref{proposition-upperbound} provides the upper bound of the system downlink throughput in cache-aided FiWi  networks considering the fiber backhaul bottleneck.
The upper bound Eq. (\ref{upperbound}) is from using the Jensen's inequality (\ref{Requal}) and the bound is tight when the number of UEs served by each ONU-AP is close to $\lambda \pi D^2$, i.e.,  $|\Phi_n| \rightarrow \lambda \pi D^2 $.  In other words, the throughput upper bound is tight in the case of a uniform spatial distribution of UEs. Particularly, the throughput upper bound is achievable when $|\Phi_n| = \lambda \pi D^2 $. In practice, since UEs are modeled as independent Poisson point processes, UEs may be located closely together in certain areas and far away in others. In this case, the condition that the equality sign holds in the Jensen's inequality cannot be satisfied, which might make the actual downlink throughput lower than the theoretical upper bound.

Note that each AP consumes the same amount of power $P_n^T$ for transmission when the throughput reaches the upper bound. In other words, the upper bound of the throughput is given by Eq. (\ref{upperbound}) as long as the transmission power $P_n^T$ for each AP is determined. However, the transmission power is not given in our problem formulation and it only needs to satisfy constraint (\ref{constraint:power_constraint}), i.e., not exceeding the maximum power $P_M$. Therefore, it is necessary to determine the amount of transmission power to maximize the downlink throughput. Based on Proposition \ref{proposition-upperbound}, we then have the following corollary.

\newtheorem*{Corollary 1}{Corollary 1}
\begin{Corollary 1}
 $R^+$ is a concave function of $P_n^T$, and the downlink throughput upper bound under the maximum power constraint is the supremum of $R^+$ with respect to $P_n^T \in [0, P_M]$, i.e.,
\begin{equation}\label{supR}
    R^*(P_{nk}, x_{nj}) \leq \sup_{P_n^T \in [0, P_M]}\{R^+\}.
\end{equation}
\end{Corollary 1}
\begin{proof}
Please refer to Appendix \ref{AppendixC}.
\end{proof}

Since the throughput upper bound is obtained, we can calculate the corresponding cache capacity utilization, which by definition can be expressed as
\begin{equation}\label{cache utilization}
    \frac{1}{\omega Q}\Big(P_M-\mathop{\arg\max}_{P_n^T \in [0,P_M]}\{R^+\}\Big).
\end{equation}
This gives an analytical result on the appropriate average cache capacity utilization when the downlink throughput reaches the theoretical upper bound, and we will also compare it with the simulation results in the next section.

\section{Numerical and Simulation Results}
In this section, we present both numerical and Monte Carlo simulation results to validate the theoretical analysis and the performance of the proposed power allocation and caching optimization algorithm under different FiWi  network scenarios. The performance is averaged over 1000 network deployments in the Monte Carlo simulations.
At the fiber backhaul, ONU-APs connect with the OLT via a 1:16 splitter. At the wireless front ends, UEs are scattered in a square area of 700 m $\times$ 700 m based on independent PPPs. The traffic is generated according to UEs' requests of files, and the backhaul traffic is generated when a requested file by a UE is not cached in the associated ONU-AP. The joint caching and power allocation solution will be updated when a new network deployment is performed.
The simulation parameters are summarized in Table \uppercase\expandafter{\romannumeral2}. To better assess the performance of the proposed algorithm, which we refer to as volume adjustable backhaul constrained wafer filling-dynamic programming (VABWF-DP) below, we compare it with three benchmark power allocation and caching algorithms as follows,
\begin{itemize}
  \item \textbf{Water Filling-Full Caching (WF-FC)}: Each ONU-AP fully uses the cache capacity to store the most popular files, and the classical water-filling method is used to allocate transmission power to the UEs.
  \item \textbf{Equal Power-Popularity First (EP-PF)}: Transmission power is equally allocated to the UEs. Meanwhile, each ONU-AP chooses the most popular files to cache, but does not necessarily use the cache capacity in full.
  \item \textbf{Water Filling-Random Caching (WF-RC)}: ONU-APs randomly choose files to cache and transmission power is also equally allocated to the UEs.
\end{itemize}

\subsection{Validation of the Theoretical Analysis and Discussion}

\begin{table}
  \centering
  \caption{SIMULATION SETTING}
  \label{simulationsetting}
  \begin{tabular}{|l|p{3.6cm}|l|}
    \hline
    \textbf{Parameters} &  \textbf{Physical meaning}                  & \textbf{Values}\\\hline
    $B$  & Subchannel bandwidth               &$10$MHz\\\hline
    $C$  &  Fiber backhaul capacity  &$15$Gbps \\\hline
    $D$  &  Coverage radius of ONU-AP  & $100$m \\\hline
    $J$ & Number of files& $1000$\\\hline
    $N$ & Number of ONU-APs                            &$16$ \\\hline
    $Q$  & Cache size of each ONU-AP  &$40$GB\\\hline
    $P_M$  &  Maximum total power for each ONU-AP & $8$W \\\hline
    $P_n^{cc}$ & Circuit power at each ONU-AP   &$2$W   \\\hline
    $s$ & Size of file& $100$MB\\\hline
    $N_{\mathrm{L}} / N_{\mathrm{N}}$  & Nakagami fading parameter for LOS and NLOS channel   & $3/2$ \\\hline
    $\alpha_\mathrm{L} / \alpha_\mathrm{N}$  & Path loss exponent of LOS and NLOS channel  & $2/4$\\\hline
    $\beta$ & Blockage density             &  $0.002$   \\ \hline
    $\delta$  &  Zipf parameter  & $0.8$ \\\hline
    $\lambda$  &   Density of UEs   &  $4 \times 10^{-4}$ /m$^2$   \\ \hline
    $\rho$ & Transmission power coefficient   & $1.2$\\\hline
    $\omega$  & Caching power efficiency   &$6.25\times 10^{-12}$W/bit  \\\hline
  \end{tabular}
\end{table}
We first validate the analysis of the throughout upper bound as well as the cache capacity utilization. Fig. \ref{fig:UBColt} shows the performance of downlink throughput for different fiber backhaul bandwidths. For fair comparison, the average number of UEs is set to be equal. It can be seen that the trend of throughput upper bound of the theoretical analysis is consistent with that of the simulation results. Specifically, when the fiber backhaul bandwidth is relatively low, the downlink throughput increases linearly with the fiber backhaul bandwidth, whereas it increases slowly with a relatively high  backhaul bandwidth. We also observe that the downlink throughput is higher when ONU-APs have smaller coverage radius $D$. Quantitatively, the simulation results of the throughput have reached an average of 94.8\% of the theoretical upper bound. For the backhaul bandwidth of 15 Gbps, the corresponding downlink throughput is about 29.3 Gbps, 36.2 Gbps and 42.3 Gbps under three different AP coverage radii, respectively. This corresponds to a gain of about 95.3\%, 141.3\%, and 182.0\% with respect to a FiWi network without caches, respectively. Fig. \ref{fig:UBCacheUtilization} shows the performance of the cache capacity utilization determined by Eq. (\ref{cache utilization}) and the simulation results for different fiber backhaul bandwidths. As can be seen, the two sets of results match each other very well for all the considered backhaul bandwidth $C$. Full cache capacity is used in both theoretical analysis and simulations when the fiber backhaul bandwidth is relatively low, and the cache capacity utilization decreases as the fiber backhaul bandwidth increases. Besides, less cache capacity is used for a larger ONU-AP coverage radius.

\begin{figure*}[ht]
  \subfigure[]{
    \label{fig:UBColt} 
    \includegraphics[width=2.3in]{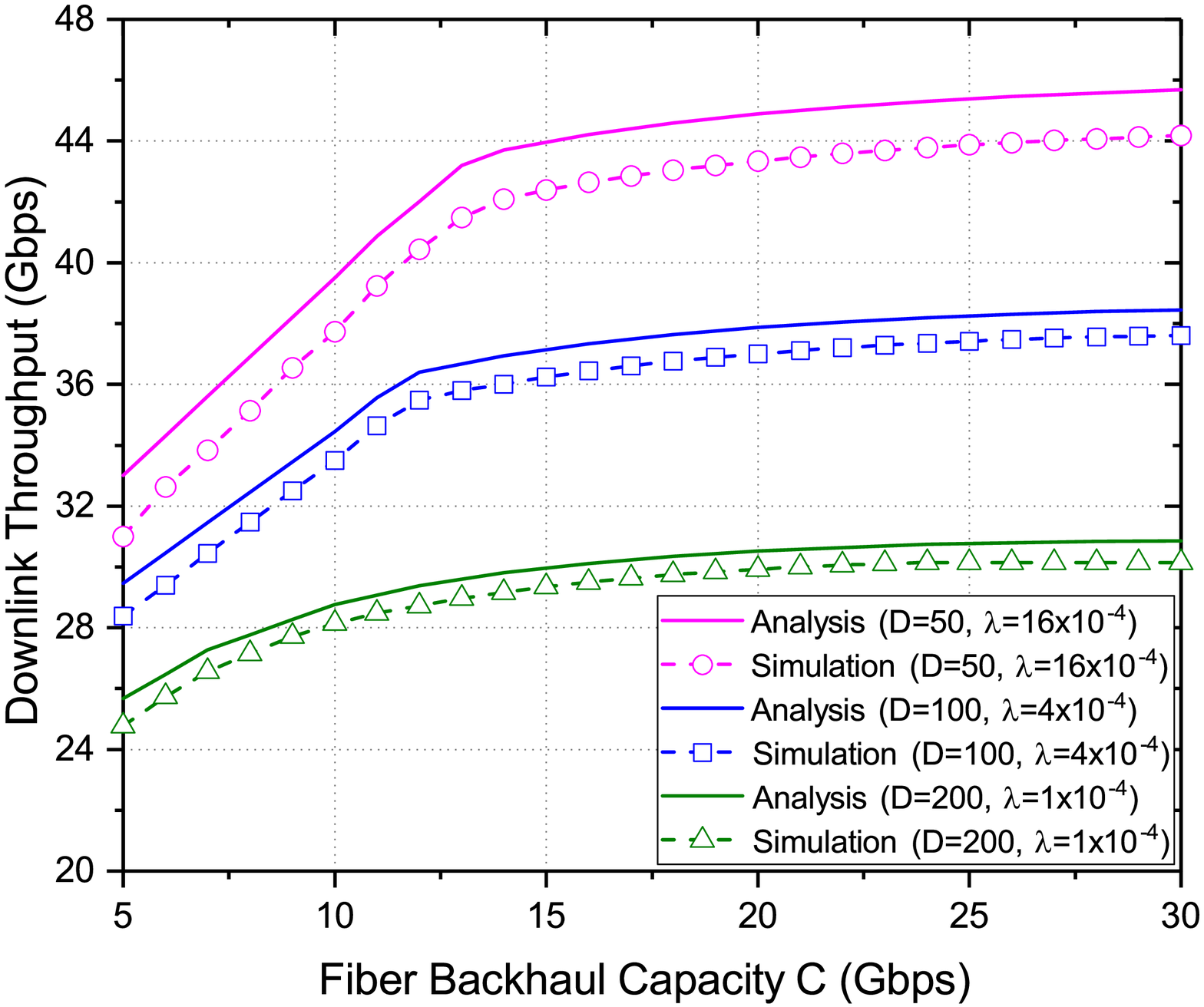}}
  \subfigure[]{
    \label{fig:UBCacheUtilization} 
    \includegraphics[width=2.3in]{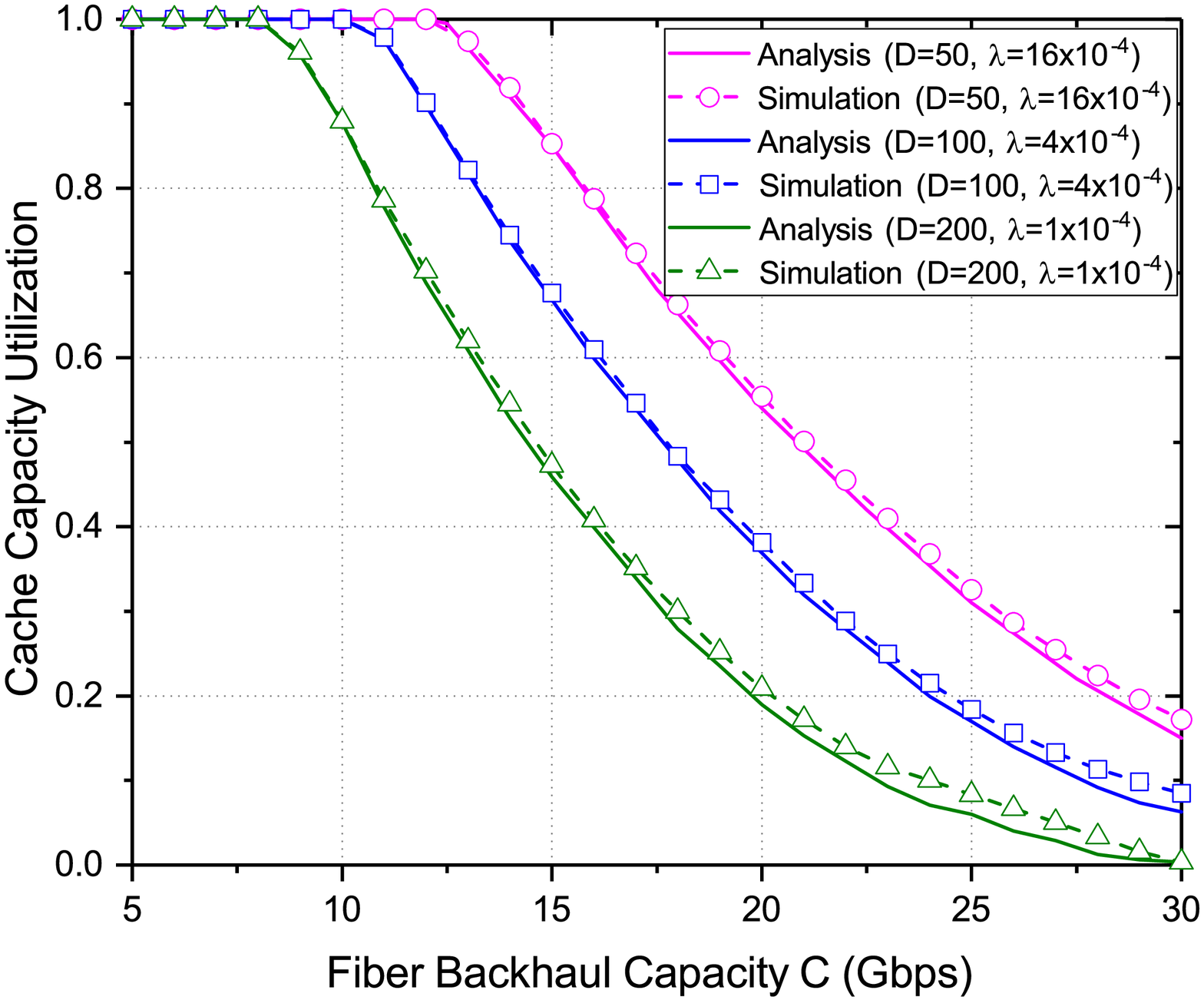}
    }
  \subfigure[]{
    \label{fig:AnalysisCacheUtilization} 
    \includegraphics[width=2.3in]{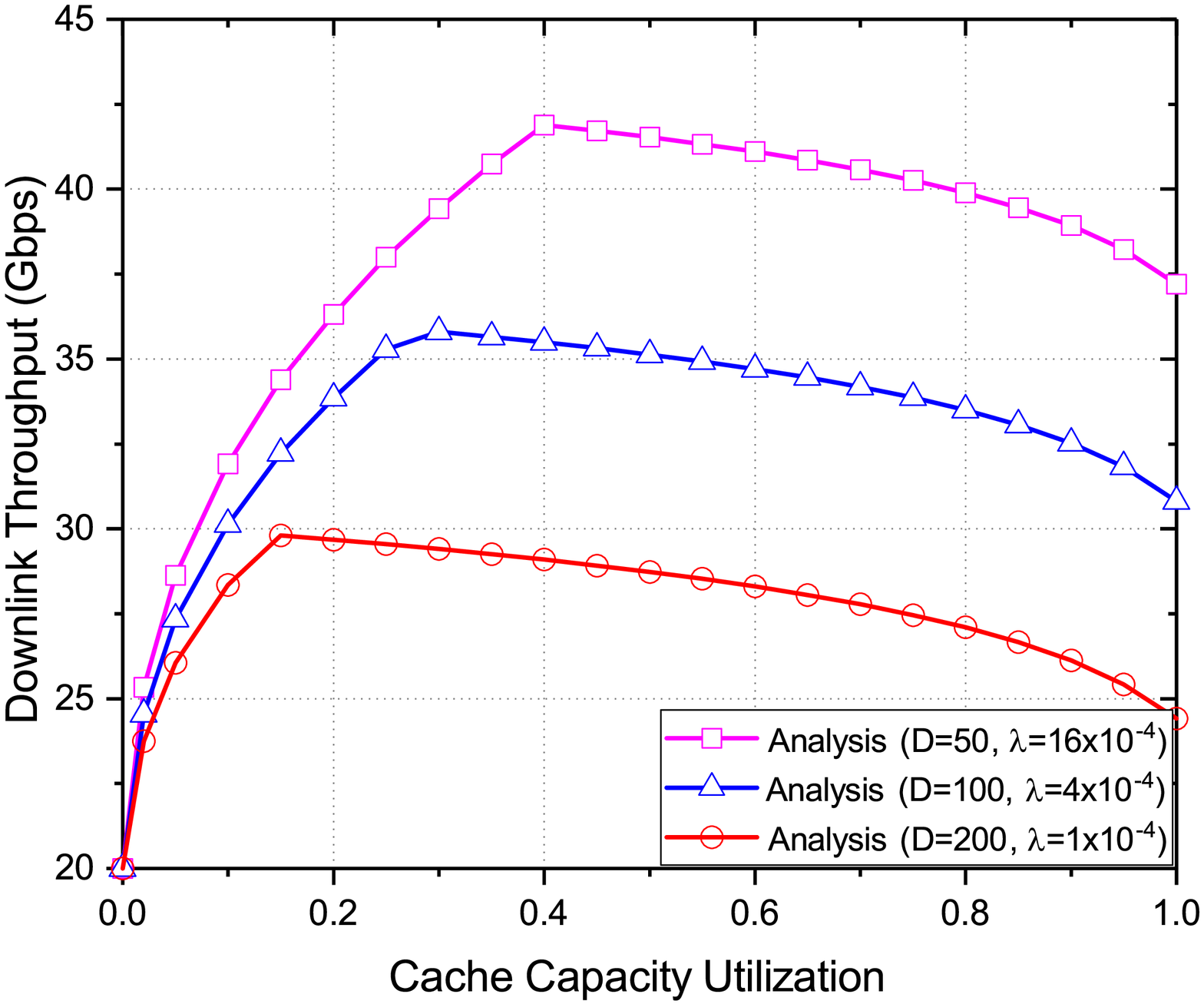}
    }
  \caption{Comparison between theoretical analysis and simulation results with respect to $C$ and $D$. (a) Throughput vs. fiber backhaul capacity; (b) Cache capacity utilization vs. fiber backhaul capacity; (c) Downlink throughput vs. Cache capacity utilization under different $D$.}
  \label{fig:2}
\end{figure*}

Comparing Fig. \ref{fig:UBColt} with Fig. \ref{fig:UBCacheUtilization}, we find that the throughput is lower when full cache capacity is used. This reflects the impact of cache capacity on downlink throughput. Due to the limited cache capacity on ONU-APs, even though full cache capacity is used, the backhaul is still unable to carry all the traffic of cache-missed files. Thus, the insufficient fiber backhaul bandwidth remains the bottleneck of the system throughput.
We find that the fiber backhaul is less likely to become a bottleneck of the system throughput when the coverage radius $D$ of ONU-APs is large. This is because the average distance between ONU-APs and UEs is farther, resulting in smaller average rate of mmWave links, and thus there is less demand for fiber backhaul bandwidth.
It can be found that the cache capacity utilization is reduced when the backhaul bandwidth is higher. This is because there is no need to use full cache capacity at this time, but rather to use more power for wireless transmission so that the wireless link capacity is increased, thereby achieving higher downlink throughput. In addition, although the UEs are spatially randomly distributed in the simulations, the proposed algorithm can approach the theoretical throughput upper bound since we have proposed an optimal transmission power allocation scheme and a dynamic programming algorithm. As will be shown later, the proposed algorithm outperforms existing algorithms in terms of the system throughput.

In Fig. \ref{fig:AnalysisCacheUtilization}, we plot the numerical results of the throughput as a function of the cache capacity utilization. It is not difficult to find that we normally do not use full cache capacity when the maximum downlink throughput is achieved, because using too much or too little cache capacity will result in a decrease in the downlink throughput. It should be noted that the cache capacity utilization here is related to the transmission power $P_n^T$ in Eq. (\ref{upperbound}). For example, if the cache capacity utilization is 0, the transmission power $P_n^T$ equals the maximum power $P_M$, in which case all the files need to be acquired through the backhaul, so the downlink throughput cannot be higher than the fiber backhaul bandwidth. Conversely, if the cache capacity utilization is 100\%, then the transmission power is minimized. In this case, it is not the backhaul bandwidth, but the mmWave link capacity becomes the bottleneck of the throughput due to the limited transmission power.
In addition, the downlink throughput is higher for smaller coverage radius $D$ of ONU-APs, and the corresponding cache capacity utilization is increased, which is consistent with the  simulation results in Fig. \ref{fig:UBCacheUtilization}. The reason for this phenomenon is that the average rate of mmWave links is higher when $D$ has a smaller value, so more caches are used to alleviate the backhaul bandwidth pressure.

\subsection{Comparison between the Proposed Algorithm and Benchmark Algorithms}
\begin{figure}[ht]
  \centering
  \subfigure[]{
    \label{fig:TransmissionPower-lambda} 
    \includegraphics[width = 2.5in]{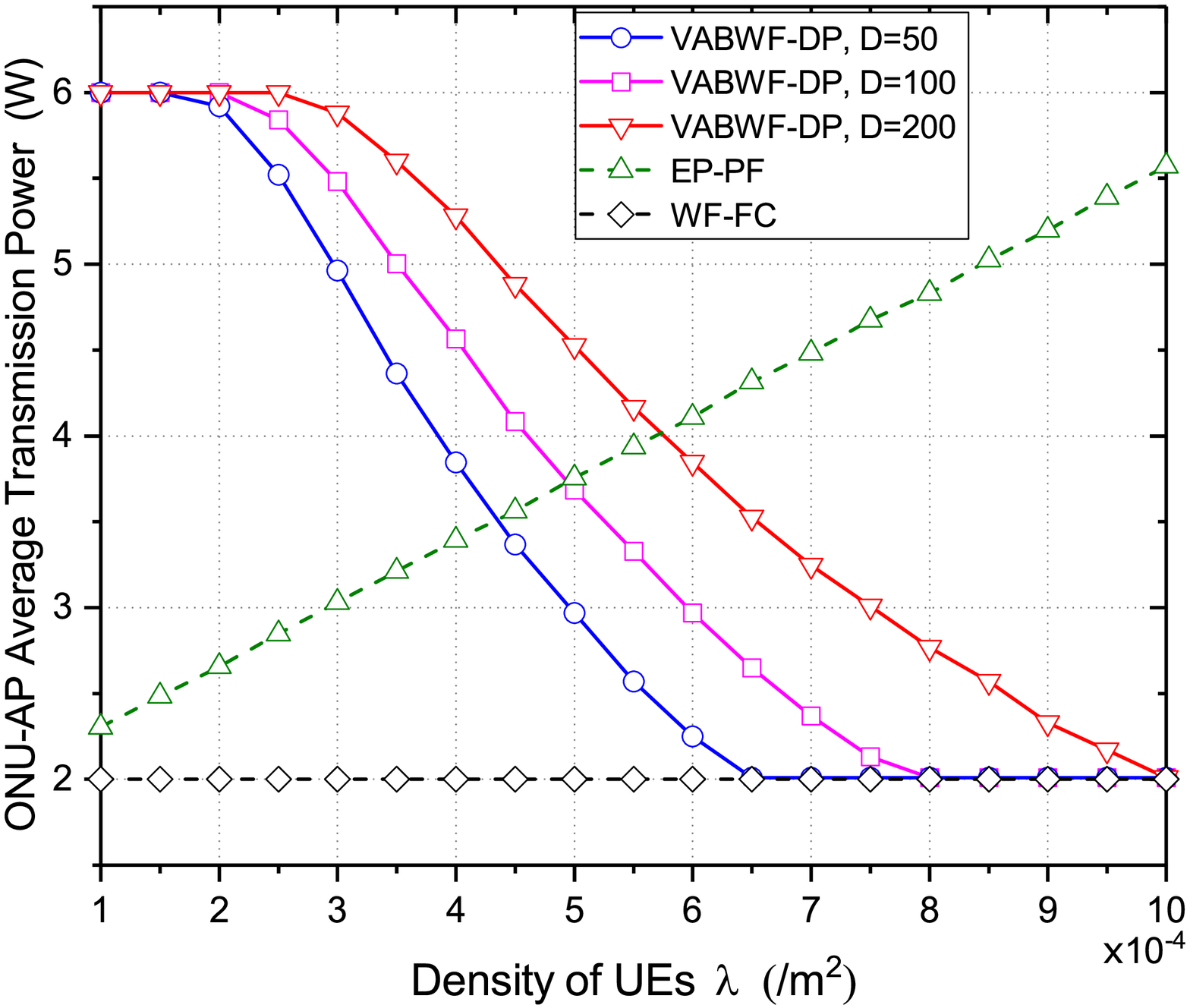}}
  \subfigure[]{
    \label{fig:Cachingprob} 
    \includegraphics[width = 2.5in]{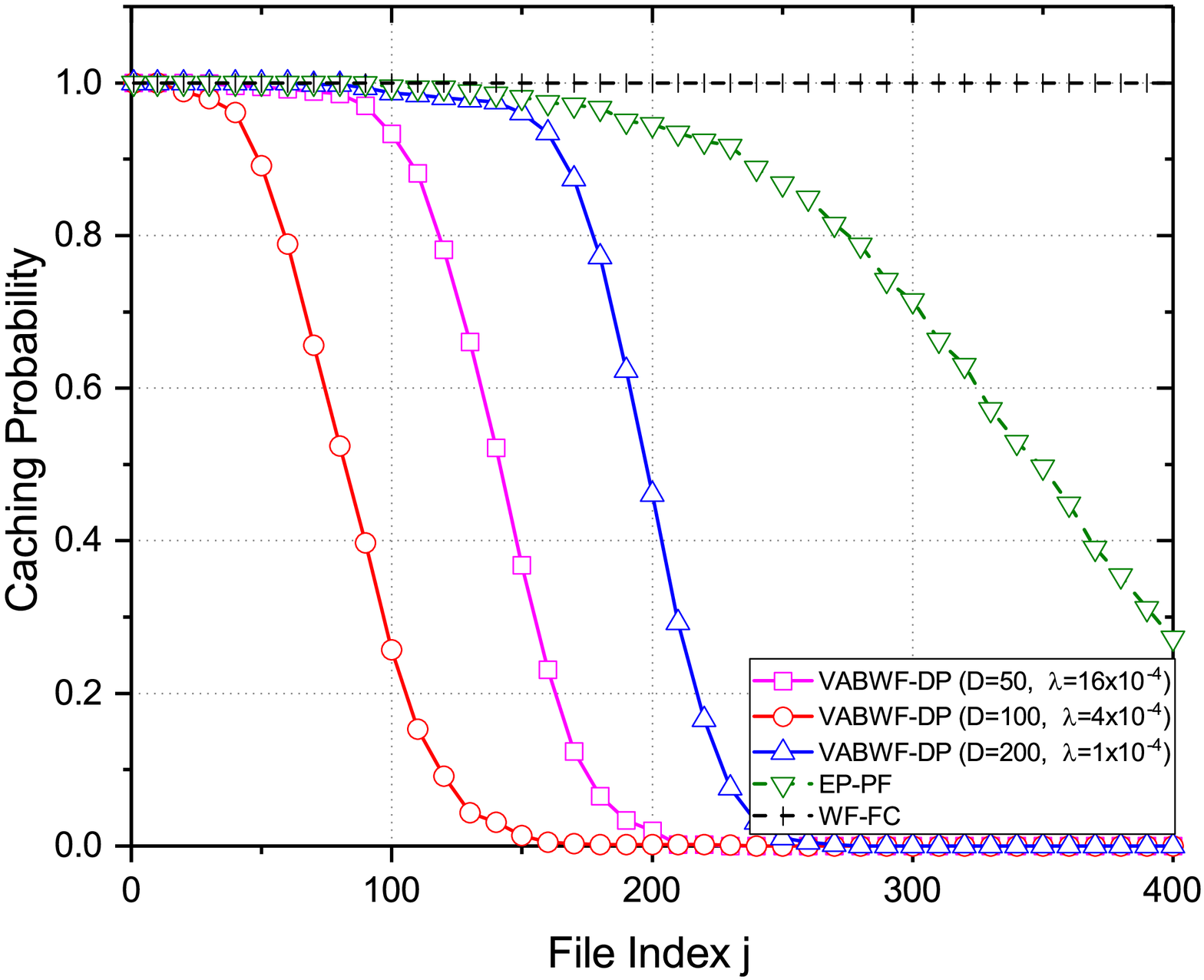}
    }
  \caption{(a) ONU-AP average transmission power for different algorithms; (b) Caching probability of files for different algorithms. }
  \label{fig:3} 
\end{figure}

Fig. \ref{fig:TransmissionPower-lambda} shows the average transmission power for different algorithms, which is defined as the average value of total transmission power $P_n^T$ consumed at ONU-AP$_n$ over Monte Carlo simulations.
It is observed that the average transmission power decreases as the number of UEs increases using the proposed algorithm VABWF-DP, while the average transmission power is unchanged by using WF-FC or increased by using EP-PF. In the case of these two benchmark algorithms, the cache hit ratio remains unchanged or even decreases as the number of UEs increases, so more requested files have to be obtained via the capacity-limited fiber backhaul, which becomes the bottleneck of the downlink wireless access throughput, thus it is wasteful to use more transmission power as the number of UEs increases. On the contrary, by using our proposed algorithm VABWF-DP, increasing the power used for caching and reducing the transmission power enable more requested files to hit the cache. Then this portion of cache-hit traffic will not be limited by the bottleneck of the backhaul bandwidth, so higher downlink wireless access throughput can be achieved.

Fig. \ref{fig:Cachingprob} shows the caching probability of files under different algorithms. In the simulation, a total number of 400 files can be cached if we use full cache capacity. By utilizing the highest-popularity-first caching strategy, files indexed after 400 will not be cached. It can be seen that VABWF-DP adjusts the caching probability of each file according to the network parameters (e.g., ONU-AP coverage radius). For instance, when the coverage radius of ONU-APs is large, most files will be cached with a lower probability, which means that the average cache capability utilization is lower. This is because the cache hit ratio does not need to be high with a relatively low mmWave link capacity, and thus more power is used for transmission to further enhance the system throughput.

\begin{figure}[ht]
  \centering
  \subfigure[]{
    \label{fig:Comparison-radius} 
    \includegraphics[width = 2.5in]{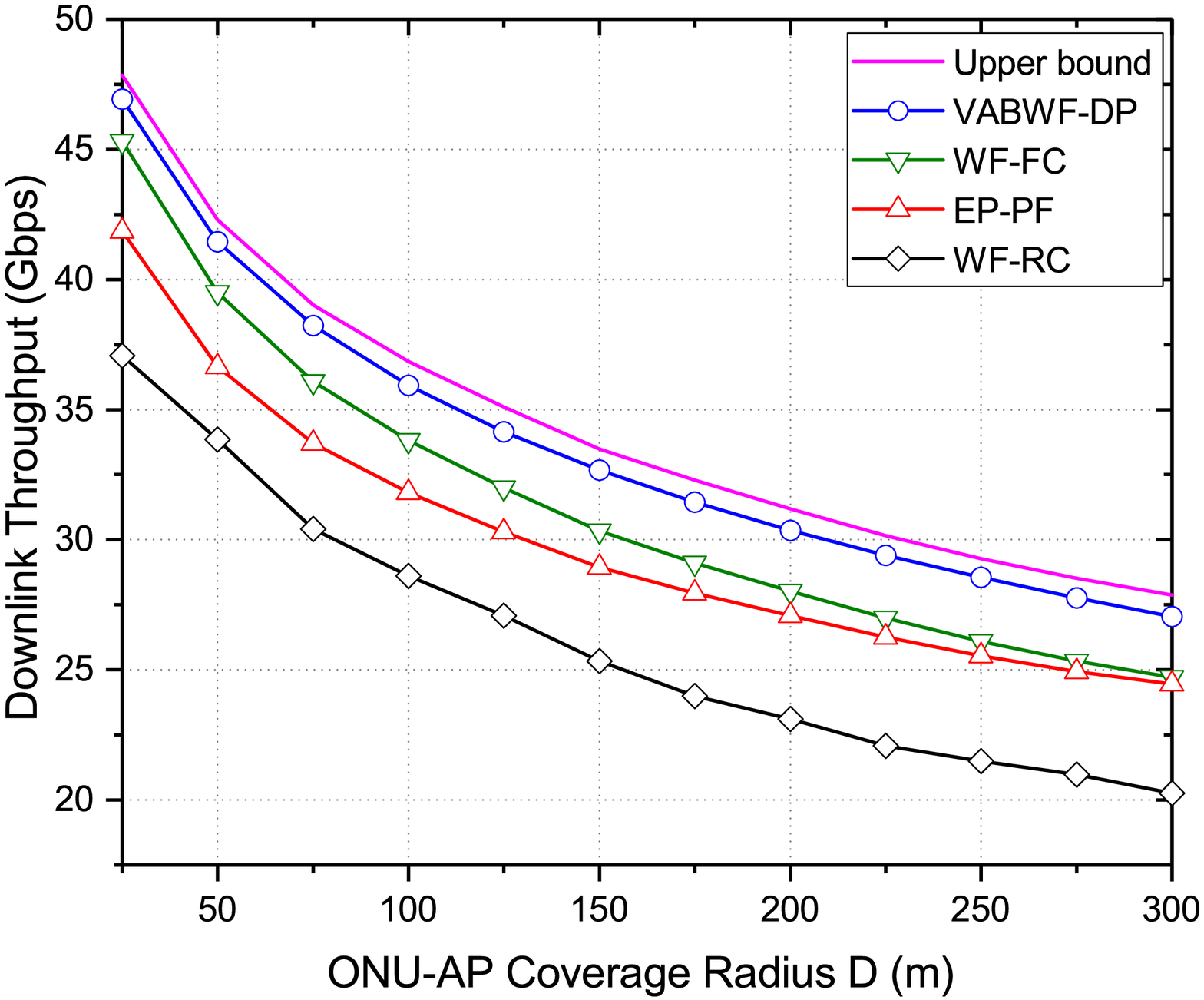}}
  \subfigure[]{
    \label{fig:Comparison-beta} 
    \includegraphics[width = 2.5in]{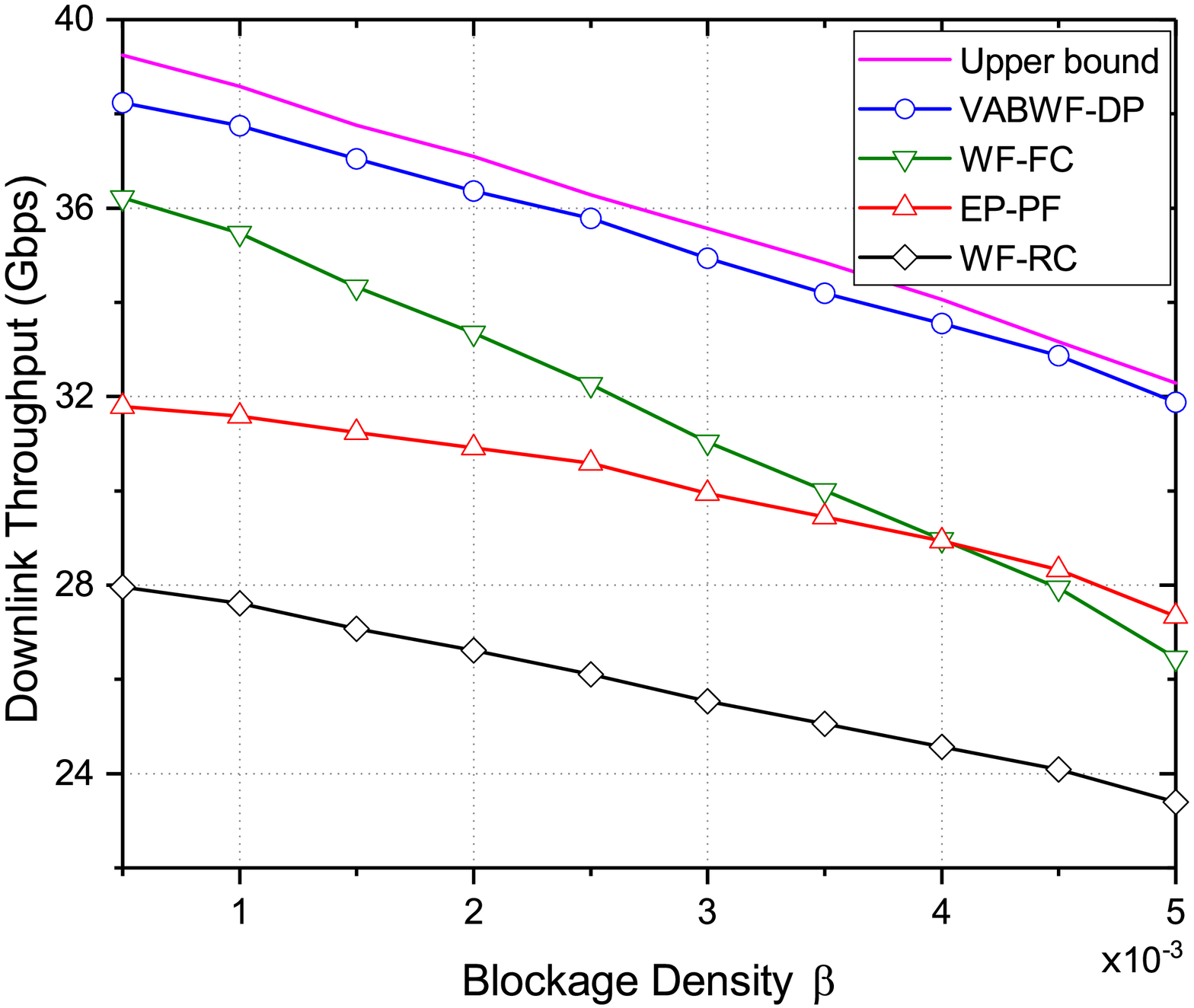}
    }
  \caption{(a) The performance of downlink throughput with respect to $D$; (b) The performance of downlink throughput with respect to $\beta$. }
  \label{fig:4} 
\end{figure}

Next, we compare the proposed algorithm with the other three algorithms in terms of the downlink throughput. Fig. \ref{fig:Comparison-radius} shows the performance of downlink throughput as the ONU-AP coverage radius $D$ varies. The average number of UEs is set to be equal. It is observed that we achieve higher throughput by using VABWF-DP at different ONU-AP coverage radii, which can be explained by Figs. \ref{fig:TransmissionPower-lambda} and \ref{fig:Cachingprob}. As the coverage radius increases, the proposed algorithm gradually increases the transmission power and reduces the number of cached files, which can provide more flexibility than the other algorithms that cannot adapt to different network conditions.

\begin{figure*}[htbp]
  \subfigure[]{
    \label{fig:Comparison-Pmax} 
    \includegraphics[width=2.28in]{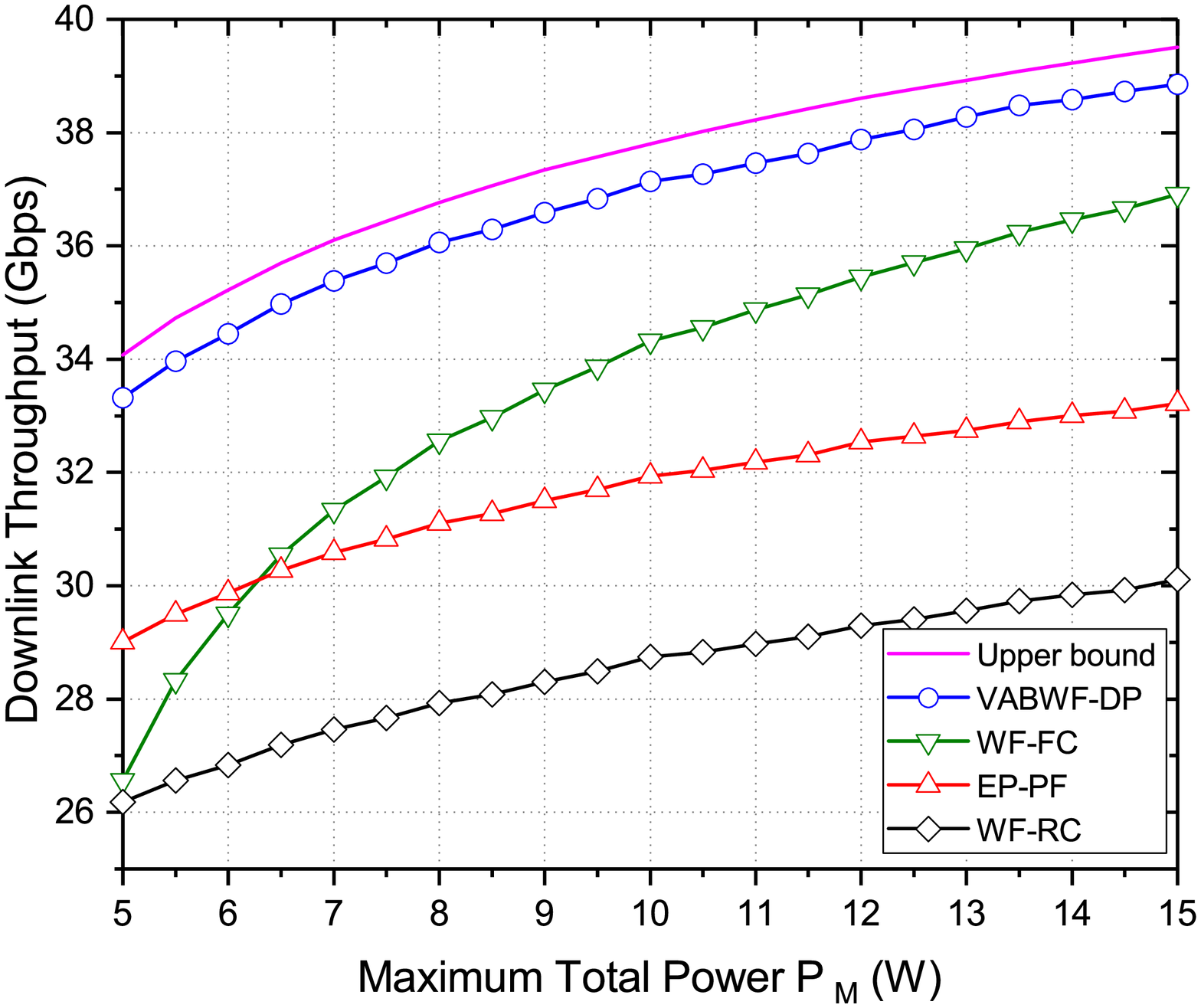}}
  \hspace{0.0in}
  \subfigure[]{
    \label{fig:Comparison-Backhaul} 
    \includegraphics[width=2.3in]{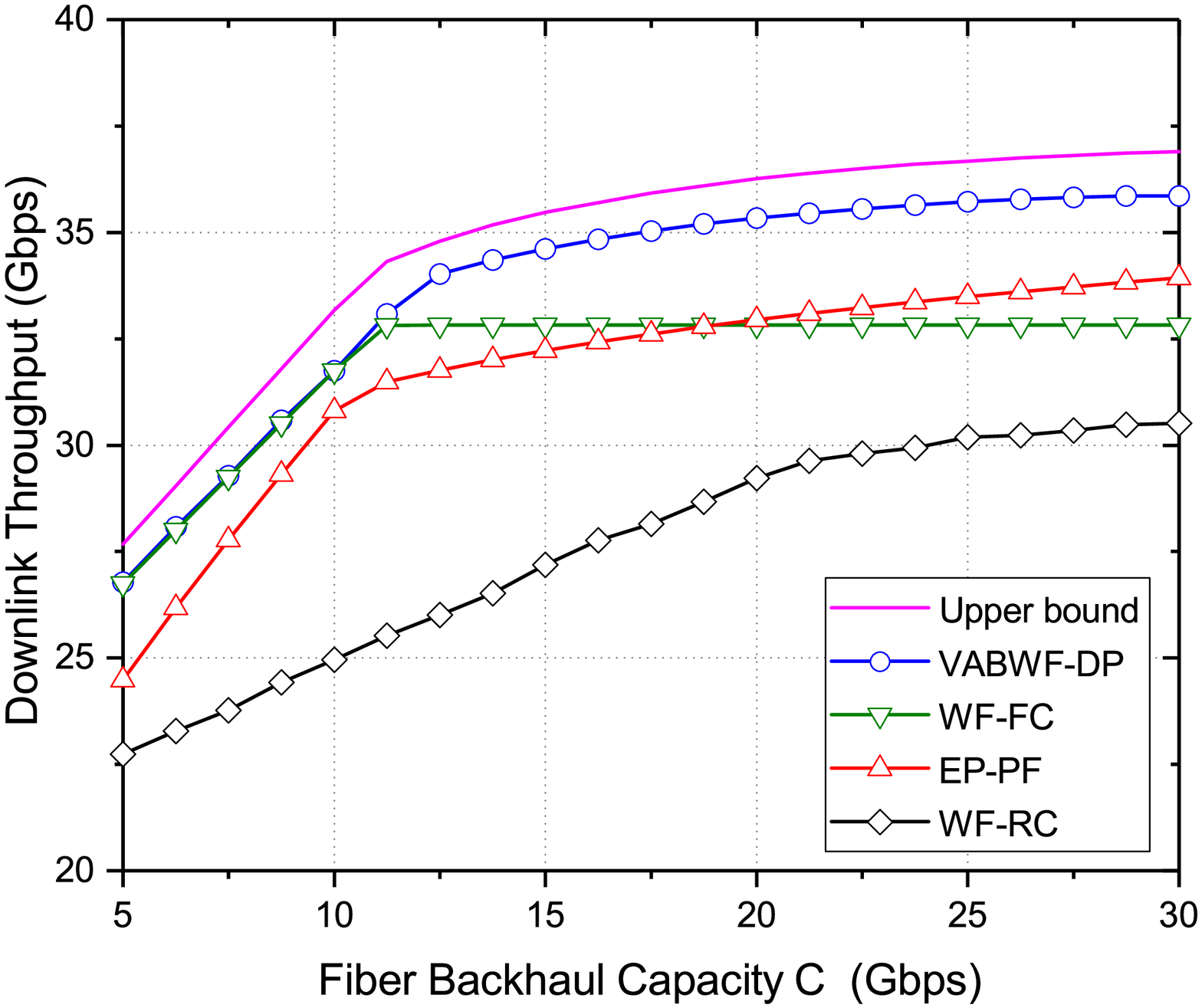}
    }
  \hspace{0.0in}
  \subfigure[]{
    \label{fig:Comparison-Zipf} 
    \includegraphics[width=2.25in]{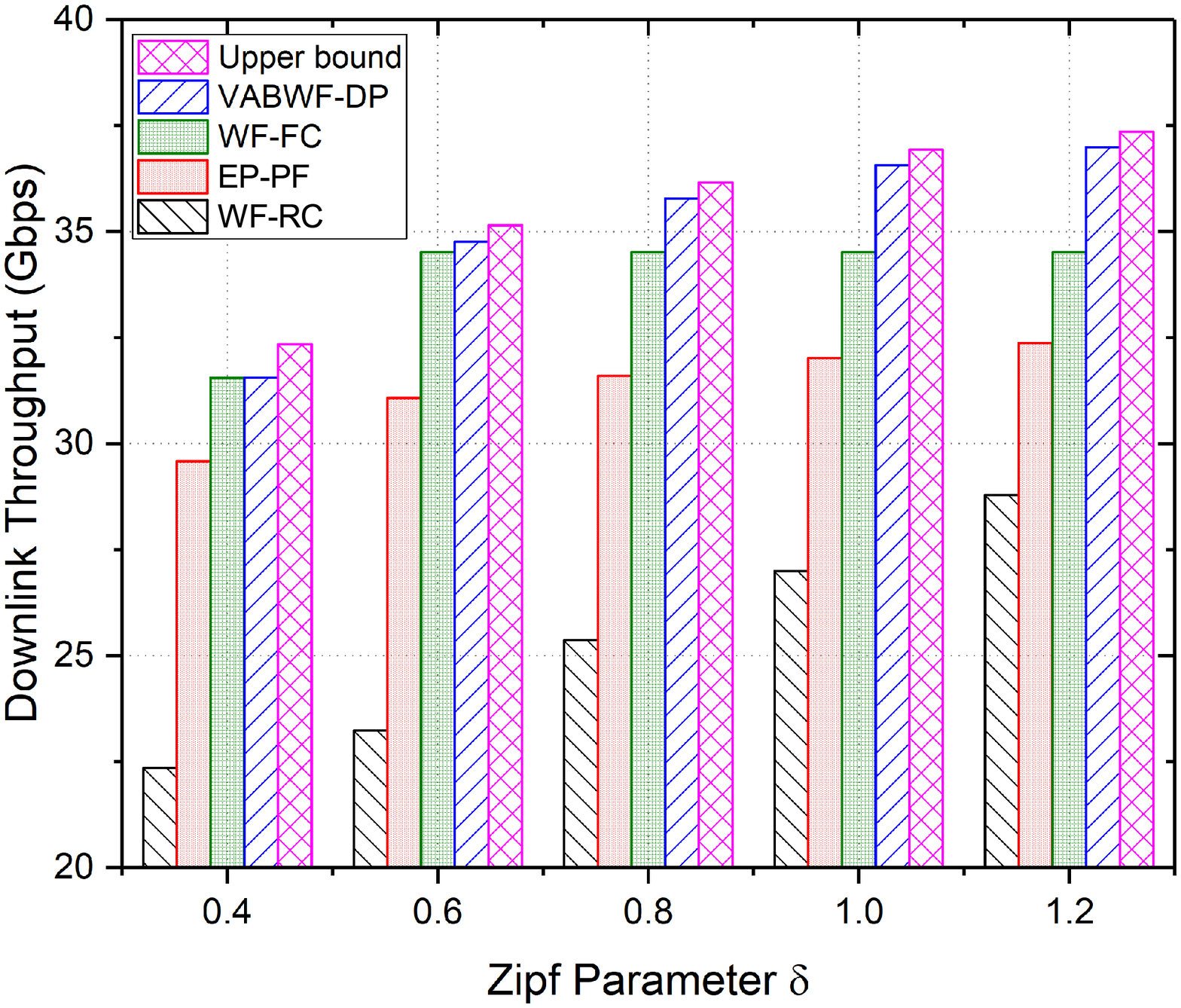}
    }
  \caption{(a) The performance of downlink throughput with respect to $P_M$; (b) The performance of downlink throughput with respect to $C$; (c) The performance of downlink throughput with respect to $\delta$.}
  \label{fig:5} 
\end{figure*}

The performance of the downlink throughput under different blockage densities is shown in Fig. \ref{fig:Comparison-beta}. The throughput decreases with the increase of blockage density, and VABWF-DP achieves higher throughput than the other algorithms. It can be observed that WF-FC achieves higher throughput than EP-PF when the blockage density is relatively low. Otherwise, WF-FC achieves lower throughput. This is due to the fact that the mmWave links are more likely to be LOS links when the blockage density is low, so the average rate of mmWave links is high and it is beneficial to cache more files to reduce the backhaul bandwidth pressure. However, with the increase of blockage density, it is better to allocate more transmission power to increase the rate of mmWave links for higher throughput. In contrast, VABWF-DP is able to adjust the power allocation and caching according to different blockage densities, so the proposed algorithm is a blockage-aware power allocation and caching method.

In Fig. \ref{fig:Comparison-Pmax}, the performance of the downlink throughput under different algorithms is shown, with the maximum total power $P_M$ of an ONU-AP ranging from 5 W to 15 W. It can be seen that VABWF-DP outperforms the other algorithms under different maximum total powers, and the gap between VABWF-DP and the upper bound of the theoretical analysis is small. This is because the proposed algorithm can adaptively adjust the transmission power and caching power of different ONU-APs.
It can be seen that the gap between VABWF-DP and WF-FC is large when the maximum total power $P_M$ is low, but the gap gradually decreases with the increase of $P_M$.
This reflects the fact that VABWF-DP and WF-FC use a more similar strategy as the maximum total power increases. In other words, by adopting VABWF-DP, more caching power is used with the purpose of mitigating backhaul bandwidth pressure with the increase of maximum total power. It is also observed that EP-PF fails to achieve higher throughput, because it does not take into account the impact of either the wireless channel conditions or the non-uniform spatial distribution of UEs. In contrast, the proposed algorithm allocates more transmission power to UEs with better channel conditions to improve the total throughput, and dynamically determines the total transmission power to cope with the throughput degradation due to the non-uniform spatial distribution of UEs. Therefore, the throughput performance of the proposed algorithm is close to the theoretical upper bound.

In Fig. \ref{fig:Comparison-Backhaul}, we evaluate the performance of different algorithms as fiber backhaul capacity $C$ varies. We can observe that the throughput performance of the proposed algorithm is still the closest to the theoretical upper bound. It is worth noting that in the case of low fiber backhaul capacity, the throughput achieved by using WF-FC is very close to that of the proposed algorithm, which is consistent with the simulation results shown in Fig. \ref{fig:UBCacheUtilization}. This observation suggests that increasing the cache capacity utilization is an effective way to achieve higher throughput when the backhaul bandwidth is low. However, WF-FC cannot keep increasing the throughput as the backhaul bandwidth increases. The reason is that the mmWave link capacity is limited by the transmission power regardless of the increase in backhaul bandwidth. In comparison, the proposed algorithm uses more power for wireless transmission instead of caching, thereby further increasing the capacity of wireless links and obtaining throughput gain.

Fig. \ref{fig:Comparison-Zipf} shows the performance of downlink throughput as the Zipf distribution parameter $\delta$ varies, which reflects the performance under various file popularities. It can be seen that the downlink throughput of the proposed algorithm increases with parameter $\delta$. This is reasonable because more user requests centralize on a smaller number of files with a larger $\delta$, while the rest of the files are rarely requested, which results in the increase of cache hit ratio and the reduction of backhaul bandwidth occupancy. Therefore, the proposed algorithm tends to cache fewer files, saving more caching power to increase transmission power and achieving higher throughput. In contrast, WF-FC fails to achieve higher throughput despite the change of Zipf parameter $\delta$. This indicates that using full-cache algorithm is rather wasteful because the files that are rarely requested are also cached in ONU-APs. Thus, no more power can be used for wireless transmission, limiting the system throughput. We also notice that EP-PF is not even comparable to WF-FC. This indicates the importance of transmission power allocation based on wireless channel conditions of UEs, which has a great impact on the system throughput.
Therefore, the proposed VABWF-DP for power allocation is more efficient in improving the throughput than EP-PF in the case of sum power constraints of ONU-APs.

\section{Conclusion}
In this paper, to cope with fiber backhaul bottleneck in FiWi networks, we equipped ONU-APs with caches to improve the system throughput. We illustrated the necessity of jointly considering power allocation and caching to enhance the network performance. We derived a closed-form expression of optimal wireless transmission power allocation using the proposed VABWF method. To perform the joint optimization of power allocation and caching, we converted the problem to a standard MCKP and designed a dynamic programming algorithm to maximize the downlink throughput. We then theoretically analyzed the average rate of mmWave links and obtained an upper bound of the downlink throughput. We proved that the theoretical throughput upper bound is achievable in the case of a uniform spatial distribution of UEs. Numerical and simulation results showed that the proposed algorithm significantly outperforms existing algorithms in terms of system throughput and is a good approximation of the derived throughput upper bound under different FiWi network scenarios. The results conclude that an appropriate allocation of transmission power and caching power is needed to achieve higher throughput in FiWi networks.

In our future work, inter-ONU-AP communications will be introduced in cache-enabled FiWi networks to better facilitate the handover issues when UEs are mobile. In this case, the cache updating may be performed through inter-ONU-AP connections when there are UEs' communication handovers between different ONU-APs, so that the cache capacity utilization at ONU-APs can be further improved, thus addressing the insufficient backhaul bandwidth issue and better facilitating cases when UEs are mobile. At the same time, resource allocation should be reconsidered since caching has a significant impact on inter-ONU-AP bandwidth allocation. Joint consideration of caching, dynamic bandwidth allocation and power allocation is a promising direction to facilitate the mobility of UEs and improve the efficiency of FiWi networks.

\appendices

\section{Proof of Proposition \ref{proposition-ergodic-rate}}  \label{proof-of-Proposition-2}
Considering that the mmWave link may be in the LOS or NLOS state, according to the law of total probability, we have
\begin{align}
& \mathbb{P}\bigg[\log_2(1 + \frac{P_{nk} h G r^{-\alpha_m}}{\sigma^2})>t\bigg]  \nonumber \\
= & \sum_{i \in \mathrm{L,N}} \rho_i(r)  \mathbb{P}\bigg[\log_2(1 + \frac{P_{nk} h G r^{-\alpha_i}}{\sigma^2})>t\bigg]  \nonumber \\
= & \sum_{i \in \mathrm{L,N}} \rho_i(r)  \mathbb{P}\bigg[h > \frac{(2^t - 1) \sigma^2 r^{\alpha_i}}{ P_{nk} G}\bigg].  \label{ergodic-Pnk}
\end{align}
To allocate the optimal transmission power $P_{nk}^*$ to UE$_k$, we refer to Eq. (\ref{optimal Pnk}) instead of adopting an equal transmission power allocation scheme. Particularly, the expectation of $\sum_{k \in \Phi_n} {\frac{\sigma^2}{r_{nk}^{-\alpha_i}  h_{nk}}}$ in Eq. (\ref{mu n without lamda}) is calculated as
\begin{align}
  \mathbb{E}\left[ \sum_{k \in \Phi_n} {\frac{\sigma^2}{r_{nk}^{-\alpha_i}  h_{nk}}} \right] & = \lambda \pi D^2 \left(\frac{D}{2}\right)^{\alpha_i} \sigma^2 \mathbb{E} \left[h^{-1} \right],
\end{align}
where $h^{-1}$ follows the inverse Gamma distribution, and the expectation of $h^{-1}$ can be calculated as
\begin{align}
  E[h^{-1}] & = \frac{(N_i)^{N_i}}{\Gamma(N_i)} \int_0^{\infty} h^{-N_i} \mathrm{e}^{-\frac{N_i}{h}} \mathrm{d} h   \nonumber \\
  & = \frac{\Gamma(N_i-1)}{\Gamma(N_i)} N_i = \frac{N_i}{N_i - 1}  ,  i \in \{ \mathrm{L, N} \},
\end{align}
where $\Gamma(\cdot)$ is the gamma function.
Substituting Eq. (\ref{optimal Pnk}) into Eq. (\ref{ergodic-Pnk}) and using the fact that the average number of UEs associated with ONU-AP$_n$ is $\lambda \pi D^2$, we have
\begin{align}
  & \mathbb{P}\bigg[h > \frac{(2^t - 1) \sigma^2 r^{\alpha_i}}{ P_{nk} G}\bigg]  \nonumber \\
  = \ & \mathbb{P} \left[ h > \frac{r^{\alpha_i} (2^t + G - 1)  } {U_i + V P_n^T }  \right]  \nonumber \\
  \overset{(a)} = \ & 1 - \left[  1 - \exp\left(-\frac{\eta_i r^{\alpha_i} (2^t + G - 1)  } {U_i + V P_n^T}\right)  \right]^{N_i}  \nonumber \\
  \overset{(b)} = \ & \sum_{n = 1}^{N_i} (-1)^{n+1} \binom{N_i}{n} \exp\left(-\frac{n \eta_i r^{\alpha_i} (2^t + G - 1)  } {U_i + V P_n^T}\right),
\end{align}
where $\eta_i = N_i (N_i!)^{-\frac{1} {N_i} }$, $U_i = \frac{G N_i (D/2)^{\alpha_i} } {N_i - 1}$, $V = \frac{G} {\sigma^2 \lambda \pi D^2}$. (a) follows from the tight lower bound of a Gamma random variable, and (b) follows by using Binomial theorem and the assumption that $N_i$ is an integer. After some algebraic manipulations, the desired proof is obtained.

\section{Proof of Lemma 2} \label{AppendixA}
 For all $P_n^T \in (0,P_{M})$, choose any $P_{n(1)}^T,P_{n(2)}^T$ that satisfies $P_{n(1)}^T<P_{n(2)}^T$. Since $C_n$ is a monotonically increasing function of $P^T$, we have $C_n \big(P_{n(1)}^T \big)< C_n \big(P_{n(2)}^T\big)$ and $p_n^{miss} \big(P_{n(1)}^T\big)<p_n^{miss} \big(P_{n(2)}^T\big)$, then
\begin{align}\label{Taylor}
   \frac{R_n^2-R_n^1}{C_n^2-C_n^1}  =& \ \frac{R_n^2-R_n^1}{p_n^{miss}\big(P_{n(2)}^T\big)R_n^2-p_n^{miss}\big(P_{n(1)}^T\big)R_n^1} \nonumber &\\
   <& \ \frac{R_n^2-R_n^1}{p_n^{miss}\big(P_{n(1)}^T\big)R_n^2-p_n^{miss}\big(P_{n(1)}^T\big)R_n^1} \nonumber &\\
   =& \ \frac{1}{p_n^{miss}\big(P_{n(1)}^T\big)} = f'\big(C_n\big).
\end{align}
By rearranging the terms in Eq. (\ref{Taylor}), we can obtain
\begin{equation}\label{concavefunction}
  R_n^2 < R_n^1 + f'\big(C_n\big)\big(C_n^2-C_n^1\big),
\end{equation}
which is the first-order Taylor approximation of $f(C_n)$. As it is always a global upper estimator of $f(C_n)$, we can conclude that $f(C_n)$ is a concave function.

\section{Proof of Proposition \ref{proposition-upperbound}} \label{AppendixB}
The downlink throughput can be calculated by adding up the rates of all the UEs, which can be expressed as
  \begin{equation}
    R=\sum_{n \in \mathcal{N}} R_n = \sum_{n \in \mathcal{N}} f(C_n).
  \end{equation}
  By applying Jensen's inequality, we obtain $R \leq N f\Big(\frac{\sum_{n \in \mathcal{N}}C_n}{N}\Big)$, where the equality sign holds if and only if all the ONU-APs occupy equal fiber backhaul bandwidth denoted by $\bar{C}$, i.e.,
  \begin{equation}\label{equalCn}
    C_n = \bar{C} = p_n^{miss} |\Phi_n| \tau(P_n^T,\lambda,D), n \in \mathcal{N}.
  \end{equation}
  By substituting Eq. (\ref{equalCn}) into Eq. (\ref{Rn}), and using the fact that $[p_n^{miss} (P_n^T)]^{-1}$ is concave, we obtain
  \begin{equation}\label{Requal}
   R = \bar{C} \sum_{n \in \mathcal{N}} [p_n^{miss} \left(P_n^T)\right]^{-1} \leq \bar{C} N \bigg[p_n^{miss} \bigg(\frac{\sum_{n \in \mathcal{N}} P_n^T}{N}\bigg)\bigg]^{-1},
  \end{equation}
  where the equality sign holds if and only if all the ONU-APs consume the same amount of transmission power, i.e.,
 \begin{equation}\label{PTequal}
   P_n^T=P^T, n \in \mathcal{N}.
  \end{equation}
     Combining Eqs. (\ref{PTequal}) and (\ref{equalCn}), we can obtain that $|\Phi_n| = \lambda \pi D^2, n \in \mathcal{N}$, which indicates that each ONU-AP serves the same number of UEs. Plugging Eq. (\ref{equalCn}) into Eq. (\ref{Requal}) and substituting $|\Phi_n|$ with $\lambda \pi D^2$, we can obtain
  \begin{equation}\label{Rleqsigma}
    R \leq \lambda  \pi D^2 N \tau (P_n^T, \lambda, D).
  \end{equation}
  Considering the constraint of fiber backhaul capacity, if $p_n^{miss} (P_n^T) \lambda \tau(P_n^T,\lambda, N ) \geq C$, we have
    \begin{equation}\label{Rleqsigma2}
    R \leq C + p_n^{hit}(P_n^T) \lambda \pi D^2 N \tau (P_n^T, \lambda, D).
  \end{equation}
  Taking the minimum of Eqs. (\ref{Rleqsigma}) and (\ref{Rleqsigma2}) gives the desired result.

\section{Proof of Corollary 1} \label{AppendixC}
To prove the concavity property of $R^+$, we first show that $p_n^{hit} \tau (P_n^T, \lambda, D)$ is a concave function of $P_n^T$. According to the analysis in Subsection \uppercase\expandafter{\romannumeral4}-B, $p_n^{hit}$ is a concave function of $P_n^T$, and $\tau (P_n^T, \lambda, D)$ is a concave function of $P_n^T$. What we need to prove is that the product of these two functions is still a concave function of $P_n^T$. Taking the second order derivative of $p_n^{hit} \tau (P_n^T, \lambda, D)$ with respect to $P_n^T$, we have
\begin{align}
&\frac{\partial^2 \big(p_n^{hit} \tau (P_n^T, \lambda, D)\big)} {\partial (P_n^T)^2} \nonumber \\
= \ & \frac{\partial^2 (p_n^{hit})}{\partial (P_n^T)^2}\tau (P_n^T, \lambda, D) + 2 \frac{\partial p_n^{hit}}{\partial P_n^T} \frac{\partial \tau (P_n^T, \lambda, D)}{\partial P_n^T} \nonumber  \\
&  + \frac{\partial^2 \tau (P_n^T, \lambda, D)} {\partial (P_n^T)^2} p_n^{hit}.
\end{align}
As $\frac{\partial p_n^{hit}}{\partial P_n^T} < 0$, $\frac{\partial \tau (P_n^T, \lambda, D)}{\partial P_n^T} > 0$, and $\frac{\partial^2 \tau (P_n^T, \lambda, D)} {\partial (P_n^T)^2}< 0$, $\frac{\partial^2 (p_n^{hit})}{\partial (P_n^T)^2} < 0$ due to the concavity properties, we can conclude that $\frac{\partial^2 \big(p_n^{hit} \tau (P_n^T, \lambda, D)\big)} {\partial (P_n^T)^2} >0$, so $p_n^{hit} \tau (P_n^T, \lambda, D)$ is a concave function of $P_n^T$.
Then the pointwise minimum $R^+$ of $\lambda \tau(P_n^T, \lambda, D)$ and $C + p_n^{hit} \lambda \tau (P_n^T, \lambda, D)$, defined by
\begin{equation}
R^+(P_n^T) = \mathrm{min}\left(\lambda \tau(P_n^T, \lambda, D), C + p_n^{hit} \lambda \tau (P_n^T, \lambda, D)\right),
\end{equation}
is also concave due to the pointwise minimum operations that preserve concavity of functions. In this case, there exists a unique supremum with respect to $P_n^T \in [0, P_M]$, which is the downlink throughput upper bound under the maximum power constraint.

\bibliographystyle{ieeetr}

\end{document}